\newtheorem{lemma}{Lemma}
\newtheorem{theorem}{Theorem}
\newtheorem{corollary}{Corollary}
\newtheorem{definition}{Definition}
\theoremstyle{definition}
\newtheorem{example}{Example}
\theoremstyle{plain}
\pgfplotsset{compat=1.13}
\def\BibTeX{{\rm B\kern-.05em{\sc i\kern-.025em b}\kern-.08em
    T\kern-.1667em\lower.7ex\hbox{E}\kern-.125emX}}
\begin{document}

%\newpage

%\include{replies}

%\title{Efficiently Generating Voice Data Summaries \\via Pre-Generated Speech Fragments}
%\title{Answering Voice Queries Efficiently\\via Pre-Generated Data Summaries}
%\title{Selecting Optimal Fact Combinations \\for Summarizing Voice Query Results}

\title{Optimally Summarizing Data by Small Fact Sets\\for Concise Answers to Voice Queries}

\author{\IEEEauthorblockN{1\textsuperscript{st}Immanuel Trummer}
\IEEEauthorblockA{\textit{Database Group} \\
\textit{Cornell University}\\
Ithaca, NY \\
it224@cornell.edu}
\and
\IEEEauthorblockN{2\textsuperscript{nd} Connor Anderson}
\IEEEauthorblockA{\textit{Database Group} \\
\textit{Cornell University}\\
Ithaca, NY \\
ca339@cornell.edu}
}

\maketitle

\begin{abstract}
Our goal is to find combinations of facts that optimally summarize data sets. We consider this problem in the context of voice query interfaces for simple, exploratory data analysis. Here, the system answers voice queries with a short summary of relevant data. Finding optimal voice data summaries is computationally expensive. Prior work in this domain has exploited sampling and incremental processing. Instead, we rely on a pre-processing stage generating summaries of data subsets in a batch operation. This step reduces run time overheads by orders of magnitude. 

We present multiple algorithms for the pre-processing stage, realizing different tradeoffs between optimality and data processing overheads. We analyze our algorithms formally and compare them experimentally with prior methods for generating voice data summaries. We report on multiple user studies with a prototype system implementing our approach. Furthermore, we report on insights gained from a public deployment of our system on the Google Assistant Platform.
\end{abstract}

\begin{IEEEkeywords}
voice query, data summary, data vocalization
\end{IEEEkeywords}

\section{Introduction}

Our goal is to optimally summarize data with a bounded number of facts. This problem arises in the context of voice query interfaces~\cite{Lyons2016, Saul2019, Yang2019, Joglekar2015a, Trummer2019a} (VQIs). Motivated by the rise of devices such as Google Home or Amazon Alexa, VQIs translate speech input to SQL queries and report the query result via voice output. Voice output must be concise~\cite{MILLER1956, Trummer2018}, preventing VQIs from transmitting all but the smallest results in detail. Hence, we focus on approximating data as closely as possible, using a length-limited speech summary. We present multiple algorithms to solve this problem efficiently, exploiting pre-processing to reduce run time overheads.

Our optimality criterion is based on the user's belief about data. We aim to alter the user's expectations via speech output to bring it as close as possible to the actual data. In that, we follow prior work on data vocalization~\cite{Trummer2018a, Trummer2019a} and visualization (where the goal is to select a plot to reduce deviation between user expectations and data~\cite{Sarawagi2000, Sarawagi2000a}).

%Given a \textcolor{red}{voice query}, our goal is to summarize the data using a bounded number of facts. Our search space are fact sets. \textcolor{red}{Similar to prior work on data visualization and vocalization~\cite{Sarawagi2000, Trummer2018a}}, our goal is to \textcolor{red}{present facts that} 

%give users the most precise impression of the data under the aforementioned constraints. We base our optimization goal on a user behavior model that we validate in multiple user studies. %\textcolor{red}{The following example illustrates the problem we address.}

%Intuitively, Speech~2 approximates the data better (i.e., the difference between approximation and data, summed over all fields, is smaller for Speech~2). We formalize this intuition in the following section. 

The resulting problem is NP-hard (see Section~\ref{analysisSec}) and challenging to solve. Prior work resorted to sampling~\cite{Trummer2019a, Trummer2018, Trummer2018a} to generate approximate summaries at run time. We present a new approach that is based on the following hypothesis: typical voice queries are short and simple (as it becomes tedious to formulate long queries without being able to see and edit them). Hence, we can generate answers for all queries up to a certain (small) length in an efficient batch operation. In the following, we present exact and approximate algorithms for this pre-processing step, evaluate their performance, and assess them in user studies.

%In this work, we explore the possibility of pre-processing data subsets to speed up voice summarization at run time.

%We focus particularly on scenarios where voice query backends run in the Cloud. This scenario is realistic when voice applications are served via frameworks such as Google Assistant (which is the case for our prototype). Our primary metrics are then latency as well as the processing costs per voice query, payable to the Cloud provider. We show that our approach reduces monetary processing costs by orders of magnitude, compared to another recent baseline generating voice summaries. 

In summary, our original scientific contributions are threefold. First, we propose several novel algorithms that generate optimized speech summaries of data subsets. Second, we analyze those algorithms formally. Third, we analyze the algorithms experimentally, in terms of computational efficiency and by a user study.

\begin{comment}
\begin{itemize}
\item We are the first to explore pre-processing as a means of avoiding run time overheads when serving user groups via voice query interfaces.
\item We propose multiple approaches to generate speech summaries in a pre-processing step, including exhaustive search, greedy algorithms, and different methods to improve efficiency by pruning.
\item We evaluate the proposed approaches experimentally, including multiple user studies in which we compare voice against visual interfaces.
\end{itemize}
\end{comment}

The reminder of this paper is organized as follows. In Section~\ref{modelSec}, we introduce our formal problem model. In Section~\ref{overviewSec}, we give an overview of the end-to-end voice querying engine in which the proposed methods are used. Next, we describe an exhaustive algorithm for that problem in Section~\ref{exactSec}. Then, we present a greedy algorithm to solve the problem in Section~\ref{greedySec}. We show that this algorithm generates guaranteed near-optimal speech summaries. Next, we show in Section~\ref{pruningSec} how to improve efficiency by pruning irrelevant speech fragments early. We analyze complexity of the target problem and of the proposed algorithms in Section~\ref{analysisSec}. In Section~\ref{experimentsSec}, we evaluate all proposed approaches experimentally. Finally, we discuss related work in Section~\ref{relatedSec} before we conclude.

%Also, we report on results of multiple user studies. 

%This paper focuses on a pre-processing step in which speech fragments are generated. We introduce the corresponding problem model in Section~\ref{modelSec}. 

%for the NYC area from the US Census American Community Survey health-related statistics by the US Census popular data set from the NYC Open Data domain, capturing 311 service requests\footnote{\url{https://data.cityofnewyork.us/Social-Services/311-Service-Requests-from-2010-to-Present/erm2-nwe9}}. 

\section{Problem Model}
\label{modelSec}

\tikzstyle{ontime}=[fill=gray!10]
\tikzstyle{slight}=[fill=yellow!50, draw=none]
\tikzstyle{moderate}=[fill=orange!50, draw=none]
\tikzstyle{late}=[fill=red, draw=none]

\begin{figure}[t]
    \centering
    \begin{tikzpicture}
    \begin{groupplot}[group style={group size=3 by 1, y descriptions at=edge left}, width=3cm, height=3cm, view={0}{90}, xticklabels={East, South, West, North}, xticklabel style={rotate=60}, yticklabels={Spring, Summer, Fall, Winter}, xmin=0, xmax=4, ymin=0, ymax=4, xtick={0.5, 1.5, 2.5, 3.5}, ytick={0.5, 1.5, 2.5, 3.5}, point meta min=0, point meta max=20]
        \nextgroupplot[title={Data}]
        \draw[ontime] (axis cs: 0,0) rectangle (axis cs:4,4);
        \draw[late] (axis cs: 0,3) rectangle (axis cs:1,4);
        \draw[slight] (axis cs: 1,3) rectangle (axis cs:4,4);
        \draw[late] (axis cs: 3,3) rectangle (axis cs:4,4);
        \draw[slight] (axis cs: 3,0) rectangle (axis cs:4,1);
        \draw[slight] (axis cs: 3,2) rectangle (axis cs:4,3);
        \draw[late] (axis cs: 3,1) rectangle (axis cs:4,2);
        \draw[late] (axis cs: 1,1) rectangle (axis cs:2,2);
        \draw[black] (axis cs: 0,0) rectangle (axis cs:4,4);
        \nextgroupplot[title={Speech 1}]
        \draw[ontime] (axis cs: 0,0) rectangle (axis cs:4,4);
        \draw[late] (axis cs: 1,1) rectangle (axis cs:2,2);
        \draw[late] (axis cs: 0,3) rectangle (axis cs:1,4);
        \draw[black] (axis cs: 0,0) rectangle (axis cs:4,4);
        \nextgroupplot[title={Speech 2}, area legend, legend to name=delayLegend, legend columns=4]
        \draw[ontime] (axis cs: 0,0) rectangle (axis cs:4,4);
        \draw[moderate] (axis cs: 0,3) rectangle (axis cs:4,4);
        \draw[moderate] (axis cs: 3,0) rectangle (axis cs:4,4);
        \draw[black] (axis cs: 0,0) rectangle (axis cs:4,4);
        \addlegendimage{fill=gray!10}
        \addlegendentry{No Delay}
        \addlegendimage{fill=yellow!50}
        \addlegendentry{10 Mins}
        \addlegendimage{fill=orange!50}
        \addlegendentry{15 Mins}
        \addlegendimage{fill=red}
        \addlegendentry{20 Mins}
    \end{groupplot}
    \end{tikzpicture}
    
    \ref{delayLegend}
    \caption{Data on airplane delays as a function of region and season (left plot) and approximation created by two alternative speeches (middle and right plot).}
    \label{fig:delays}
\end{figure}
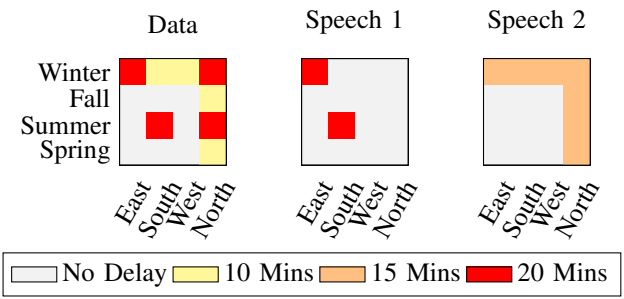

We introduce our problem model using examples.

%Our goal is to summarize relational data via concise speech summaries. We introduce a corresponding model and terminology. In the following, we assume that summarization focuses on how values in a target column depend on values in dimension columns. For the same database relation, we can create multiple summarization problem instances that use different target and dimension columns.

\begin{definition}
We model a \textbf{Relation} as a set $R$ of rows. Each \textbf{Row} $r=\langle D_r,v_r\rangle\in R$ is characterized by a pair where $D_r=\{\langle d_i,v_i\rangle\}$ assigns \textbf{Dimension Columns} $d_i$ to values $v_i$ while $v_r$ is a numerical value in the \textbf{Target Column} for $r$.
\end{definition}

\begin{example}
Consider a relation where each row represents delay of a flight. We are interested in how flight delays depend on region and season. Hence, the column containing delays (e.g., in minutes) is our target, we use columns containing season and region as dimensions. Figure~\ref{fig:delays} illustrates average delays for different data subset in the left plot. We will use this scenario as running example in the following.
\end{example}

\begin{comment}
\begin{example}
\textcolor{red}{Consider data on average flight delays for different regions and seasons, illustrated in Figure~\ref{fig:delays} (left). Our goal is to approximate the data by a speech containing at most two simple facts (i.e., averages for data subsets defined by a single region or seasons or both). For instance, we could convey average delays for the South in summer and for the East in Winter (Speech~1). Alternatively, we could convey averages for Winter and for the North (Speech~2). Figure~\ref{fig:delays} illustrates the approximation conveyed by those speeches. Speech~2 minimizes the distance between the approximation and actual data. Hence, it is the better summary. Our goal is to efficiently find optimal speech summaries.}
\end{example}
\end{comment}

For the following definitions, we assume that a relation has been fixed. We consider a simple class of facts that can be easily described via speech output.

\begin{definition}
A \textbf{Fact} $f=\langle D,v\rangle$ is a pair, consisting of a \textbf{Scope} $D$ and a \textbf{Typical Value} $v$. Scope $D=\{\langle d_i,v_i\rangle\}$ assigns values to a subset of dimension columns while the typical value $v$ averages over the target column. We say that a row $r=\langle D_r,v_r\rangle\in R$ is \textbf{Within Scope} for the fact if $D\subseteq D_r$ (i.e., fact and row values are consistent in the dimension columns). The typical value is the average value in the target column for all rows within scope.
\end{definition}

%\textcolor{red}{We illustrate those concepts by a running example.}

We assume that a speech template is given that allows translating facts of the form above into speech output. This template contains placeholders for the typical value and for (a variable number of) dimension columns.

\begin{definition}
For our pseudo-code, we model a \textbf{Speech} $F$ as a set of facts. We call its cardinality the \textbf{Speech Length}.
\end{definition}

\begin{example}
The following facts can be derived from the data depicted in Figure~\ref{fig:delays}, assuming the same number of flights for each season and region. The average delay in Summer in the South is 20 minutes (i.e., $F=\langle D,20\rangle$ and $D=\{\langle season,Summer\rangle,\langle region,South\rangle\}$). The average delay in Winter is 15 minutes ($D=\{\langle season,Winter\rangle\}$). A speech corresponds to a set of such facts.
\end{example}

%here, the scope is defined by restricting only one dimension

%the scope is defined by restrictions on both dimension columns

%Listening to a speech \textcolor{red}{creates} expectations about the summarized data. We model user expectations via a simple model.

We need a criterion to select between alternative fact combinations. We follow prior work on data visualization and vocalization~\cite{Sarawagi2000, Trummer2018a} that aims at selecting facts to bring the user's expectations as close as possible to the actual data. Next, we introduce a user model, modelling how user expectations change after listening to speeches.

\begin{definition}
We denote by $E(F,r)$ the \textbf{Expected Value} in the target column of row $r=\langle D_r,v_r\rangle$ after listening to facts $F$. Denote by $F_r\subseteq F$ the subset of facts for which $r$ is within scope. If $|F_r|=1$ then only one fact is relevant to the row and user expectation equals the typical value proposed by that fact. If $|F_r|>1$ then multiple facts are relevant. Denote by $V_r$ the set of typical values proposed by any fact in $F_r$. We assume that users often have prior knowledge allowing them to determine the most relevant fact among alternatives. Hence, we model user expectations as $\arg\min_{v\in V_r}|v-v_r|$ as the value closest to the target. Also, we consider \textbf{Prior Expectations} by the user (before listening) via function $P(r)$. The prior value is included in the set $V_r$ for any row.
\end{definition}

In Section~\ref{experimentsSec}, we match this model against expectations of actual users in a user study.

\begin{example}
Assume users expect no delays by default (the prior). After a speech conveying average delays in the South in Summer and in the East in Winter, the resulting expectations by users for flights in specific regions and seasons are depicted in Figure~\ref{fig:delays} (middle). If conveying average delays in Winter and average delays in the North (i.e., two facts), the resulting expectations are illustrated in the rightmost plot.
\end{example}

%The goal of speech output is to give users information on data. We measure the amount of information gained via a specific speech, its utility, by calculating deviation between user's expectations and data, before and after listening to the speech. An informative speech reduces deviations significantly.

%{The goal of querying is to gain information on data. Hence,} a speech is useful if it reduces deviation \textcolor{red}{between expectations of the listeners and the actual data}.

%Deviation is the distance between expected and actual values.

\begin{definition}
We denote by $D(F,r)$ for facts $F$ and row $r=\langle D_r,v_r\rangle$ the \textbf{Deviation} between user expectations and actual value, i.e.\ $D(F,r)=|E(F,r)-v_r|$. We denote by $D(F)=\sum_{r\in R}D(F,r)$ the accumulated deviation (also ``\textbf{Error}'') over all rows of the relation. 
\end{definition}

\begin{definition}
We denote by $U(F)$ the \textbf{Utility} of facts $F$ in bringing the user's expectations closer to actual values. More precisely, it is $U(F)=D(\emptyset)-D(F)$. This definition relies on having a prior defining user expectations in the absence of relevant facts. We use the term \textbf{Single-Fact Utility} for the utility of a singleton fact set.
\end{definition}

%During the following sections, we occasionally distinguish between single-fact and aggregate utility. Aggregate utility refers to the utility of a speech, single-fact utility to the utility of a single fact within that speech alone. Using speech output, we can only transmit a limited number of facts~\cite{MILLER1956, Trummer2018}. 

\begin{example}\label{ex:utility}
Consider expectations induced by Speech~1 and Speech~2 of our running example (see Figure~\ref{fig:delays}). As outlined before, users expect no delays by default, leading to an accumulated error of $4\cdot 20+4\cdot 10=120$ (assuming one relation row per combination of season and region). After listening to Speech~1, error reduces to $80$ (the delta of $40$ is the utility of Speech~1). On the other hand, Speech~2 reduces error to $7\cdot 5=35$ and is therefore more useful.
\end{example}

Our goal is to find the combinations of facts with limited cardinality that is most useful.

\begin{definition}\label{def:summarization}
An instance of the \textbf{Speech Summarization Problem} is defined by a triple $\langle R,F,m\rangle$ where $R$ is a relation to summarize, $F$ a set of available facts, and $m$ the maximal number of facts to use. The goal is to find up to $m$ facts $F^*$ for speech output maximizing utility, i.e.\ $\arg\max_{F^*\subseteq F:|F^*|=m}U(F^*)$. 
\end{definition}

\begin{comment}
\begin{example}
Consider the two speeches in Table~\ref{twoSpeechesTable} (Section~\ref{experimentsSec}), summarizing the data shown in Figure~\ref{fig:visp} (label ``Correct''). Both contain three facts, each fact specifying average target values (here: prevalence of visual impairment) for data subsets (here: subsets defined by location and age group). The second speech has higher utility according to our model as it induces an expectation closer to the actual data. This can be seen in Figure~\ref{fig:visp}, comparing predictions of users after listening to different speeches to the correct value.
\end{example}
\end{comment}

\section{System Overview}
\label{overviewSec}
\input{sections/overview.tex}

\section{Exact Algorithm}
\label{exactSec}

The exact algorithm finds guaranteed optimal speech summaries. We give an overview in Section~\ref{sub:exactAlg}, discuss pruning in Section~\ref{exactPruningSub}, and prove optimality in Section~\ref{sub:optimality}.

%We describe an exhaustive optimization algorithm in Section~\ref{sub:exactAlg}. In Section~\ref{sub:optimality}, we prove that it generates optimal speech summaries.

\subsection{Algorithm Overview}
\label{sub:exactAlg}

The exact algorithm is iterative. Starting from single facts, it expands speeches in each iteration until speeches reach maximum length. While doing so, it prunes speeches that provably cannot expand into an optimal speech. For pruning, it exploits a lower bound on the utility of the optimal speech (generated by a cheaper heuristic). Furthermore, the algorithm avoids enumerating redundant permutations of fact sets.

\begin{algorithm}[t]
\captionsetup{labelfont={color=black},font={color=black}}
\renewcommand{\algorithmiccomment}[1]{// #1}
\begin{algorithmic}[1]
\State \Comment{Find guaranteed optimal summary for}
\State \Comment{relation $R$ using $m$ facts from $F$. Exploit }
\State \Comment{lower bound $b$ on optimal speech utility.}
\Function{OptimalSummary}{$F,R,m,b$}
\State \Comment{Calculate utility for single fact speeches}
\State $S\gets\Gamma_{\sum\mathcal{U},F}(R\Join_{\mathcal{M}}F)$
\State \Comment{Iteratively combine facts into speeches}
\For{$i\in2,\ldots,m$}
\State \Comment{Expand speech and prune}
\State $S\gets\sigma_{\mathcal{P}(b,m-i-1)}(\Pi_{\mathcal{\widetilde{U}},S,F}(S\times F))$
\EndFor
\State \Comment{Calculate utility of final speeches}
\State $S\gets\Gamma_{\sum\mathcal{U},S}(R\Join_{\mathcal{M}}S)$
\State \Comment{Return speech with maximal utility}
\State \Return{$\arg\max_{\mathcal{U}}(S)$}
\EndFunction
\end{algorithmic}
\caption{Exhaustive algorithm for generating guaranteed optimal speech summaries.\label{exactAlg}}
\end{algorithm}

Algorithm~\ref{exactAlg} shows the associated pseudo-code. Given facts $F$ and relation $R$ as input, as well as the maximal speech length $m$ and a lower utility bound $b$, the algorithm returns an optimal combination of up to $m$ facts to summarize $R$.

The algorithm is executed as a series of relational operators ($\Gamma$ for grouping and aggregation, $\sigma$ for filtering, $\Pi$ for projection, $\Join$ for joins, and $\times$ for the Cartesian product). Our implementation executes the algorithm by issuing a series of SQL queries (thereby removing the need for transferring data out of the database system). Initially, Algorithm~\ref{exactAlg} calculates the associated utility for each fact (Line~6). Utility of a fact (or speech) is calculated by summing up utility over each row (hence, we use sum aggregation while grouping by facts). We require a join as each fact must be matched against all rows that fall within its scope. The corresponding join condition ($\mathcal{M}$) compares values in $F$ and $R$ for each dimension $d$, requiring $F.d=\mathbf{null}$ or $F.d=R.d$ for each of them. In each iteration, speeches in $S$ are expanded by adding one more fact (considering all possible facts). Furthermore, speeches are pruned via condition $\mathcal{P}$, based on several utility-related bounds (represented as~$\mathcal{\widetilde{U}}$ in Algorithm~\ref{exactAlg}). Pruning is described in the next subsection. Next, the algorithm calculates the precise utility of each remaining speech by a join between data and speeches (Line~13). Here, $\mathcal{M}$ evaluates to true if the data row falls within the scope of at least one speech fact. Finally, the algorithm returns the speech with maximal utility (Line~15).

\subsection{Pruning Speeches}
\label{exactPruningSub}

First, our utility model does not depend on the order in which facts appear in a speech. We prune out redundant fact permutations by enforcing specific order between facts. Specifically, we order facts in decreasing order of single-fact utility via pruning condition $S.\mathcal{U}_P\geq F.\mathcal{U}$. Here, $S.\mathcal{U}_P$ is the single-fact utility of the last, previously added fact in a speech to expand. $F.\mathcal{U}$ is the single-fact utility of a newly added fact.

\begin{comment}
distinguish based on fact order. Hence, we can reduce the number of speech candidates by considering (ideally) only one single permutation per unique fact set. \textcolor{red}{We only allow speeches that order facts in decreasing order of single-fact utility (i.e., utility of a speech using only that fact). }

We could use various criteria for ordering facts. For reasons justified in the following paragraphs, we specifically choose to order facts in decreasing order of single-fact utility. Denoting by $F.\mathcal{U}$ the single-fact utility of facts considered for expansion, and by $S.\mathcal{U}_P$ the single-fact utility of the previously added fact in current speeches, we use condition

\begin{equation*}
\mathcal{P}_1=S.\mathcal{U}_P\geq F.\mathcal{U}
\end{equation*}

for pruning. This eliminates speech expansions that order higher-utility facts after lower-utility facts. 
\end{comment}

Second, we prune speeches that cannot expand into an optimal speech. We compare an upper bound on the utility of all expansions of a candidate speech to a lower bound on optimal utility (input $b$). To calculate an upper utility bound for all expansions, we consider the remaining number of expansions $r$, the constraint that facts are added in decreasing order of single-fact utility, and that single-fact utility is an upper bound for the increase in utility when adding a fact to a non-empty speech. The corresponding pruning condition, $(b-S.\mathcal{U})/r\leq F.\mathcal{U}$ (where $S.\mathcal{U}$ is the upper utility bound of the speech to expand, obtained by summing utility of its facts, and $F.\mathcal{U}$ the single-fact utility of a candidate expansion), is justified in detail in the next subsection. Condition $\mathcal{P}$, used in Line~10 of Algorithm~\ref{exactAlg}, is the conjunction of the two aforementioned pruning conditions. The second pruning formula depends on $b$ and $r$, specified as parameters to $\mathcal{P}$ in Algorithm~\ref{exactAlg}.

\begin{example}
Reconsider the example depicted in Figure~\ref{fig:delays}. We consider expansions of a speech stating that the average delay in the South in Summer is 20 minutes. Calculating utility under the same assumptions as in Example~\ref{ex:utility}, this fact alone has utility 20. Consider the fact stating that the average delay in Winter is 15 minutes. This fact has single-fact utility 40. We prune the corresponding expansion as it does not order facts by (decreasing) single-fact utility. Now, consider expansion with a fact stating that the average delay in the East in Winter is 20 minutes. Assume we generate speeches with up to two facts. Knowing a speech with utility 85, generated by a heuristic, we can discard the current speech altogether (since $b=85$, $S.\mathcal{U}=20$, $F.\mathcal{U}=20$, $r=1$, and  $(b-S.\mathcal{U})/r> F.\mathcal{U}$).
\end{example}

\begin{comment}
%This decision is based on an upper bound on the utility of a speech after arbitrary expansions. To calculate that bound, we take into account the number of remaining speech expansions, the fact that facts are added in decreasing order of single-fact utility, and the insight that single-fact utility is an upper bound for added utility (i.e., when adding a fact to  }

Second, we upper-bound the added utility of any expansion of a partial speech (considering the number of remaining facts to add). To do so, we first calculate an upper bound on the utility of all partial speeches in table $S$ by summing the previous upper bound and utility of the newly added fact: 

\begin{equation*}
S.\mathcal{U}\gets S.\mathcal{U}+F.\mathcal{U}
\end{equation*}

Due to diminishing returns, summing up utility of single facts yields an upper bound on their aggregate utility. In Algorithm~\ref{exactAlg}, macro $\mathcal{\widetilde{U}}$ executes the update above. Next, we exploit this bound for pruning via condition

%\begin{equation}
%\mathcal{\widetilde{U}}=\mathcal{U}\gets S.\mathcal{U}+F.\mathcal{U}
%\end{equation}

\begin{equation*}
\mathcal{P}_2(b,r)=(b-S.\mathcal{U})/r\leq F.\mathcal{U}
\end{equation*}

where $b$ is a lower bound on optimal utility and $r$ the number of remaining facts to add. Intuitively, this pruning condition prunes speeches containing facts describing small data subsets at the beginning. We prove in the following subsection that pruning via $\mathcal{P}=\mathcal{P}_1\wedge \mathcal{P}_2$ preserves optimal speeches. %Note that, after pruning based on cheaply calculated bounds, we could additionally prune by calculating the precise utility of partial speeches. This would however require a join between speeches and data rows, an expensive operation. We perform this operation only once at the end to identify the optimal speech. 
\end{comment}

\subsection{Proof of Optimality}
\label{sub:optimality}

%We prove that Algorithm~\ref{exactAlg} produces optimal speeches. This guarantee is based on a diminishing returns property of the utility function. We define that property next.

First, we introduce a diminishing returns property.

\begin{definition}
A set function $f:S\mapsto\mathbb{R}^+$ is sub-modular if for all sets $S_1,S_2\subseteq S$ where $S_1$ is a subset of $S_2$ ($S_1\subseteq S_2$), the increase by adding a new element $s\in S$ is higher for $S_1$ than for $S_2$: $f(S_1\cup\{s\})-f(S_1)\geq f(S_2\cup\{s\})-f(S_2)$.
\end{definition}

Next, we show that utility, as a function of the set of speech facts, has the aforementioned property.

\begin{theorem}
Speech utility has diminishing returns.\label{th:diminishing}
\end{theorem}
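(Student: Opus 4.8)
The plan is to show that utility $U$ is sub-modular as a set function on facts, i.e.\ that for fact sets $F_1\subseteq F_2$ and any fact $g$, we have $U(F_1\cup\{g\})-U(F_1)\geq U(F_2\cup\{g\})-U(F_2)$. Since $U(F)=D(\emptyset)-D(F)$, the constant $D(\emptyset)$ cancels on both sides, so this is equivalent to the ``supermodular-style'' inequality on the error function, $D(F_1)-D(F_1\cup\{g\})\geq D(F_2)-D(F_2\cup\{g\})$: adding a fact to a smaller speech reduces error by at least as much as adding it to a larger speech. Because $D(F)=\sum_{r\in R}D(F,r)$ is a sum over rows and the inequality is preserved under summation, it suffices to prove the per-row statement $D(F_1,r)-D(F_1\cup\{g\},r)\geq D(F_2,r)-D(F_2\cup\{g\},r)$ for every row $r$.

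Next I would unfold the per-row definitions. Fix $r=\langle D_r,v_r\rangle$. Recall $E(F,r)=\arg\min_{v\in V_r}|v-v_r|$ where $V_r$ is the set of typical values of facts in $F$ whose scope contains $r$, together with the prior $P(r)$; and $D(F,r)=|E(F,r)-v_r|$ is just $\min_{v\in V_r}|v-v_r|$. So define $\delta(F,r):=\min\{\,|v-v_r| : v\in V_r(F)\,\}$ where $V_r(F)=\{P(r)\}\cup\{v : \langle D,v\rangle\in F,\ D\subseteq D_r\}$; then $D(F,r)=\delta(F,r)$. The key structural observation is that $V_r$ is \emph{monotone} in $F$: $F_1\subseteq F_2$ implies $V_r(F_1)\subseteq V_r(F_2)$, hence $\delta(F_1,r)\geq\delta(F_2,r)$ — more facts can only lower (or keep) the per-row error. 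Adding $g=\langle D_g,v_g\rangle$ does nothing to row $r$ unless $D_g\subseteq D_r$; in that case it inserts the single new candidate value $v_g$ into $V_r$. So both differences in the per-row inequality are of the form (old min) $-$ (min of old set with one extra element $v_g$), i.e.\ $\delta(F_i,r)-\min\{\delta(F_i,r),\,|v_g-v_r|\}=\max\{0,\ \delta(F_i,r)-|v_g-v_r|\}=(\delta(F_i,r)-|v_g-v_r|)^{+}$.

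Then the per-row claim reduces to $(\delta(F_1,r)-|v_g-v_r|)^{+}\geq(\delta(F_2,r)-|v_g-v_r|)^{+}$, which is immediate from $\delta(F_1,r)\geq\delta(F_2,r)$ together with the fact that $x\mapsto(x-c)^{+}$ is non-decreasing. (The case $D_g\not\subseteq D_r$ is trivial: both sides are $0$.) Summing over all $r\in R$ gives $D(F_1)-D(F_1\cup\{g\})\geq D(F_2)-D(F_2\cup\{g\})$, and translating back through $U=D(\emptyset)-D(\cdot)$ yields sub-modularity of $U$, which is the assertion of Theorem~\ref{th:diminishing}. I would also note in passing that the same argument shows $U$ is monotone non-decreasing, since $\delta$ is monotone non-increasing in $F$.

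I expect the only subtlety — not really an obstacle but worth stating carefully — to be the bookkeeping around the prior and around facts out of scope: one must make sure $P(r)$ is always in $V_r$ (so $\delta$ is well defined even for $F=\emptyset$, matching $D(\emptyset)$) and that a fact not in scope for $r$ genuinely leaves $V_r(F)$ unchanged, so that ``adding $g$'' really does add at most one element to the candidate set for each row. Once the per-row reduction to the map $x\mapsto(x-c)^{+}$ is in place, the rest is routine.
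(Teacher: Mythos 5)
Your proposal is correct and follows essentially the same route as the paper: reduce to a per-row inequality, use monotonicity of the minimum over the (growing) candidate-value set to get $D(F_1,r)\geq D(F_2,r)$, and then handle the added fact by comparing $|v_g-v_r|$ to the current best. Your $(x-c)^{+}$ packaging is just a unified rewriting of the paper's explicit two-case analysis (new fact beats everything in $F_2$ vs.\ not), and your explicit inclusion of the prior in $V_r$ is a small but welcome tightening of a detail the paper's proof leaves implicit.
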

\begin{IEEEproof}
Let $F_1$ and $F_2$ be two speeches (sets of facts) describing the same data such that $F_1\subseteq F_2$. Denote by $f$ a relevant fact that neither appears in $F_1$ nor $F_2$. 
For a fixed row $r$ with dimension values $D_r$ and target value $v_r$, it is $D(F_1,r)=\min_{\langle D,v \rangle\in F_1:D\subseteq D_r}|v-v_r|\geq \min_{\langle D,v \rangle\in F_2:D\subseteq D_r}|v-v_r|=D(F_2,r)$ since $F_1\subseteq F_2$. Denote by $\Delta U(F,f)=D(F,r)-D(F\cup\{f\},r)$ the utility of adding fact $f$. We distinguish two cases: either the new fact $f$ provides a better approximation than any fact in $F_2$ (case 1) or not (case 2). In case 1, $f$ provides a better approximation than any fact in $F_1$ as well (since $F_1\subseteq F_2$). Hence, $D(F_1\cup\{f\},r)=D(F_2\cup\{s\},r)$ and $\Delta U(F_1,f)\geq \Delta U(F_2,f)$ (i.e., utility is sub-modular) since $D(F_1,r)\geq D(F_2,r)$. In case 2, we have $\Delta U(F_2,f)=0$ and therefore $\Delta U(F_1,f)\geq \Delta U(F_2,f)$ (as the utility delta is non-negative). Utility is aggregated as the sum over multiple rows. The sum of sub-modular functions (with positive weights) is sub-modular~\cite{Krause2012}.
\end{IEEEproof}

Next, we upper-bound utility of speech expansions.

\begin{lemma}
For a speech ordering $m$ facts by decreasing single-fact utility, $u_i$ is single-fact utility of the $i$-th fact and $U_i$ aggregate utility of the first $i$ facts, aggregate speech utility is upper-bounded by $U_{i}+(m-i)\cdot u_i$ for any $i$.\label{lm:utilityBound}
\end{lemma}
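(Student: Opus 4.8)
The plan is to bound the aggregate utility $U_m$ of the full $m$-fact speech in terms of the utility $U_i$ of its first $i$ facts and the single-fact utility $u_i$ of the $i$-th fact. The key tool is Theorem~\ref{th:diminishing}: since speech utility has diminishing returns, adding a fact $f$ to a non-empty fact set $F$ increases utility by at most its single-fact utility (the increase when adding $f$ to the \emph{empty} set), because $\emptyset\subseteq F$ implies $\Delta U(\emptyset,f)\geq\Delta U(F,f)$, and $\Delta U(\emptyset,f)$ is exactly the single-fact utility $U(\{f\})$. So each of the $m-i$ facts added after the first $i$ contributes at most its own single-fact utility.

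Next I would use the ordering assumption. Because the facts are sorted in decreasing order of single-fact utility, each fact added after position $i$ has single-fact utility at most $u_i$. Combining this with the previous paragraph, the total increase from the $(i+1)$-th through $m$-th fact is at most $\sum_{j=i+1}^{m} u_j \leq (m-i)\cdot u_i$. Therefore $U_m = U_i + \sum_{j=i+1}^{m}\Delta U(\{f_1,\ldots,f_{j-1}\},f_j)\leq U_i + (m-i)\cdot u_i$, which is the claimed bound. (Strictly, one should note that $U_i + (m-i)u_i$ is an upper bound not just on $U_m$ but on the utility of any prefix of length between $i$ and $m$, which is what makes it usable for pruning a partial speech of length $i$ regardless of how many more facts will ultimately be added; this follows from the same telescoping argument stopped at any intermediate length.)

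I would carry this out in three short steps: (1) state the telescoping decomposition $U_m = U_i + \sum_{j>i}\Delta U(\{f_1,\ldots,f_{j-1}\},f_j)$; (2) invoke diminishing returns (Theorem~\ref{th:diminishing}) with the empty set as the smaller set to get $\Delta U(\{f_1,\ldots,f_{j-1}\},f_j)\leq u_j$ for each $j>i$ — here it matters that $j-1\geq i\geq 1$, so the prefix is non-empty and $u_j$ really is the relevant single-fact increment; (3) apply the sorting to replace each $u_j$ by $u_i$ and collect the $m-i$ terms.

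The main obstacle is a subtle indexing point in step (2): the single-fact utility $u_j$ equals the utility gain $\Delta U(\emptyset, f_j)$ of adding $f_j$ to the empty speech, and the sub-modularity inequality $\Delta U(\emptyset,f_j)\geq\Delta U(F,f_j)$ requires $F$ to contain $f_j$'s \emph{prefix}, which is fine here, but one must be careful that the diminishing-returns statement as proved covers the empty-set case (it does, since the proof's case analysis does not use non-emptiness of $F_1$). A secondary subtlety is making precise that the bound holds for \emph{every} length between $i$ and $m$, not only the terminal length $m$, since the algorithm prunes partial speeches; this is immediate from truncating the telescoping sum, but worth a sentence so the pruning argument in Section~\ref{exactPruningSub} is fully justified.
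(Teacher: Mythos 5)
Your proof is correct and follows essentially the same route as the paper's: both use diminishing returns (Theorem~\ref{th:diminishing}) with the empty set as the smaller set to bound each marginal gain $U_j-U_{j-1}$ by the single-fact utility $u_j$, then telescope and use the ordering $u_j\leq u_i$ for $j>i$ to collect $(m-i)\cdot u_i$. Your added remarks — that the sub-modularity proof covers the empty-set case and that the bound holds for every intermediate prefix length — are accurate refinements of the same argument rather than a different approach.
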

\begin{proof}
Utility has diminishing returns according to Theorem~\ref{th:diminishing}. Hence, single-fact utility is an upper bound for the increase in speech utility after adding the $i$-th fact (i.e., $U_{i}-U_{i-1}\leq u_i$). Also, as facts are ordered, we have $u_{k}\leq u_i$ for $k\in i,\ldots,m$ and with $U_m=U_{i}+\sum_{k=i+1..m}(U_{k}-U_{k-1})\leq U_{i}+\sum_{k=i+1..m}u_k$, we obtain $U_m\leq U_{i}+u_i\cdot (m-i)$. 
\end{proof}

%Next, we show that Algorithm~\ref{exactAlg} maintains an upper utility bound.

\begin{lemma}
Algorithm~\ref{exactAlg} maintains an upper bound on the utility of speeches $S$ in $S.\mathcal{U}$.\label{lm:upperBound}
\end{lemma}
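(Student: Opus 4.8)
The plan is to establish the invariant by induction on the loop counter $i$ in Algorithm~\ref{exactAlg}, i.e., on the number of facts contained in the speeches currently held in $S$. The two ingredients I would use are the sub-modularity (diminishing returns) of utility from Theorem~\ref{th:diminishing} and the elementary identity $U(\emptyset)=D(\emptyset)-D(\emptyset)=0$, which together give the sub-additivity bound $U(A\cup\{f\})\le U(A)+U(\{f\})$ for any fact set $A$ and fact $f$.

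For the base case, note that after Line~6 the table $S$ contains one tuple per fact $f\in F$, and the operator $\Gamma_{\sum\mathcal{U},F}(R\Join_{\mathcal{M}}F)$ sets $S.\mathcal{U}$ to the \emph{exact} aggregate utility $U(\{f\})$ (sum of per-row utilities over the rows within scope); so the bound holds with equality for single-fact speeches. For the inductive step, assume that at the start of iteration $i$ every speech $S'\in S$ (with $i-1$ facts) satisfies $U(S')\le S'.\mathcal{U}$. Line~10 expands each $S'$ by each $f\in F$, and the macro $\mathcal{\widetilde{U}}$ sets the bound of the new speech $S'\cup\{f\}$ to $S'.\mathcal{U}+f.\mathcal{U}$, where $f.\mathcal{U}=U(\{f\})$ is the single-fact utility carried over from Line~6. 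Instantiating the sub-modularity inequality of Theorem~\ref{th:diminishing} with $S_1=\emptyset\subseteq S'=S_2$ and the new element $f$ gives $U(\{f\})-U(\emptyset)\ge U(S'\cup\{f\})-U(S')$; since $U(\emptyset)=0$, this is $U(S'\cup\{f\})\le U(S')+U(\{f\})$, and combining with the inductive hypothesis yields $U(S'\cup\{f\})\le S'.\mathcal{U}+U(\{f\})$, which is exactly the bound the algorithm stores for the expanded speech. The subsequent filter $\sigma_{\mathcal{P}(\cdot)}$ in Line~10 only deletes tuples, so it cannot invalidate the invariant; and Line~13 overwrites $S.\mathcal{U}$ with the exact utility, which is trivially an upper bound.

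The argument is essentially routine once Theorem~\ref{th:diminishing} is in hand; the only point that needs care is bookkeeping — making explicit that the quantity added by $\mathcal{\widetilde{U}}$ at each expansion is the \emph{true} single-fact utility $U(\{f\})$ rather than an already-inflated estimate. With that observed, unrolling the update shows $S.\mathcal{U}=\sum_{k=1}^{i}U(\{f_k\})$ for a speech $S=\{f_1,\dots,f_i\}$, and $i-1$ applications of the sub-modularity inequality (again using $U(\emptyset)=0$) give $U(S)\le\sum_{k=1}^{i}U(\{f_k\})$ directly — an alternative, induction-free route to the same conclusion that also makes transparent why this bound is in general not tight (it ignores the interaction between overlapping fact scopes).
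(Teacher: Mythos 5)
Your proof is correct and follows essentially the same route as the paper's: initialize with exact single-fact utilities, then use diminishing returns (Theorem~\ref{th:diminishing}, instantiated with $S_1=\emptyset$) to show that each added single-fact utility upper-bounds the true utility increase. The paper states this more tersely; your version merely makes the induction and the $U(\emptyset)=0$ bookkeeping explicit.
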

\begin{proof}
Algorithm~\ref{exactAlg} initializes $\mathcal{U}$ via the single-fact utility (Line~6). In this case, $\mathcal{U}$ represents exact utility of speeches in $S$. Algorithm~\ref{exactAlg} updates $\mathcal{U}$ by adding single-fact utility of new facts after each expansion. Due to diminishing returns (see Theorem~\ref{th:diminishing}), single-fact utility upper-bounds utility increase when adding a fact. %Hence, the upper bound is preserved.
\end{proof}

%via macro $\mathcal{\widetilde{U}}$, adding single-fact utility of the current facts

\begin{theorem}
Pruning preserves an optimal speech.\label{th:pruningPreserves}
\end{theorem}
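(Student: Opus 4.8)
\emph{Proof proposal.} The plan is to fix an arbitrary optimal speech $F^*$ (with $|F^*|=m$ and $U(F^*)=U^*$) and to exhibit one particular ordering of its facts that is never discarded by the selection in Line~10, so that $F^*$ is still present in $S$ when Line~13 recomputes exact utilities and Line~15 returns the maximizer. I would first make explicit the one hypothesis that the claim needs: the bound $b$ supplied to the algorithm is a genuine lower bound on optimal utility, i.e.\ $b\le U^*$ (the heuristic producing $b$ returns a feasible speech). I would then write the facts of $F^*$ as $f_1,\dots,f_m$ sorted by non-increasing single-fact utility, $u_1\ge\cdots\ge u_m$ with $u_k=U(\{f_k\})$, let $S_i=\{f_1,\dots,f_i\}$ be the prefix obtained by adding these facts in this order, and introduce the shorthand $\mathcal{P}_1$ for the ordering condition $S.\mathcal{U}_P\ge F.\mathcal{U}$ and $\mathcal{P}_2$ for the bound condition $(b-S.\mathcal{U})/r\le F.\mathcal{U}$, so that $\mathcal{P}=\mathcal{P}_1\wedge\mathcal{P}_2$.

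The core is an induction on $i$ showing that after the $i$-th expansion the tuple $S_i$, carrying $f_i$ as its last-added fact, lies in $S$. The base case $i=1$ is immediate, since Line~6 inserts every single-fact speech with no filtering. For the step, assume $S_{i-1}\in S$; in $S\times F$ it is paired with $f_i$, yielding the candidate $S_i$, and I must check that $S_i$ passes $\mathcal{P}$. Condition $\mathcal{P}_1$ holds because the last fact added to $S_{i-1}$ is $f_{i-1}$, so $S.\mathcal{U}_P=u_{i-1}$, while $F.\mathcal{U}=u_i$, and $u_{i-1}\ge u_i$ by the chosen ordering. For $\mathcal{P}_2$, equivalently $b\le S.\mathcal{U}+r\cdot F.\mathcal{U}$: by Lemma~\ref{lm:upperBound} the field $S.\mathcal{U}$ equals the sum of single-fact utilities of the facts in the speech (and hence $\sum_{k\le i}u_k$ once $f_i$ has been incorporated); by Lemma~\ref{lm:utilityBound} applied to the canonically ordered $F^*$ at index $i$, $U^*\le U(S_i)+(m-i)\,u_i$; and by diminishing returns (Theorem~\ref{th:diminishing}) $U(S_i)\le\sum_{k\le i}u_k$. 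Chaining these gives $U^*\le\sum_{k\le i}u_k+(m-i)\,u_i$, which under the index bookkeeping of Algorithm~\ref{exactAlg} is exactly $S.\mathcal{U}+r\cdot F.\mathcal{U}$; combined with $b\le U^*$ this yields $\mathcal{P}_2$. Hence $S_i$ is kept, completing the induction.

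Taking $i=m$ shows that $F^*=S_m$ is present in $S$ at Line~13, where its exact utility $U^*$ is computed, so the speech returned by $\arg\max_{\mathcal{U}}$ in Line~15 has length $m$ and utility at least $U^*$; since every tuple ever produced is a legitimate speech of the appropriate length, its utility cannot exceed $U^*$, so the returned speech is optimal, which is Theorem~\ref{th:pruningPreserves}. The obstacles are conceptual rather than computational. First, one must surface the assumption $b\le U^*$, as otherwise $\mathcal{P}_2$ could eliminate every optimal speech. Second, $\mathcal{P}_1$ forces us to commit to the single decreasing-utility ordering of $F^*$ and to verify that this very permutation — not merely some rearrangement — threads through every iteration, which is exactly why the induction must track the last-added fact. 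Third, the $\mathcal{P}_2$ check hinges on lining up the remaining-expansion parameter $r$ with the bound $\sum_{k\le i}u_k+(m-i)u_i$ correctly, using the two diminishing-returns inequalities $U(S_i)\le\sum_{k\le i}u_k$ and $U^*\le U(S_i)+(m-i)u_i$; getting this bookkeeping right is the fiddly part.
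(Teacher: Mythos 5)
Your proposal is correct and rests on the same ingredients as the paper's proof --- the decreasing single-fact-utility ordering for the first atom, and Lemma~\ref{lm:utilityBound} together with Lemma~\ref{lm:upperBound} and the hypothesis $b\le U^*$ for the second --- but it is organized differently. The paper argues locally about each pruned speech: if the first atom fails, some reordering of the same fact set has equal utility and satisfies it; if the second atom fails, Lemma~\ref{lm:utilityBound} shows no completion can reach utility $b$. You instead fix one optimal speech, commit to its canonical decreasing-utility permutation, and prove by induction that this specific prefix sequence passes both atoms at every iteration, so a concrete witness survives to Line~13. That buys something real: the paper's one-line reordering remark does not by itself establish that the reordered permutation also survives the $\mathcal{P}_2$ checks of all subsequent iterations, whereas your induction verifies both conditions simultaneously for the same surviving tuple. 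You also rightly surface the assumption $b\le U^*$, which the paper uses silently. The one point both you and the paper gloss over is the index bookkeeping: the algorithm passes $r=m-i-1$, while the bound you derive, $U^*\le\sum_{k\le i}u_k+(m-i)u_i$, justifies pruning only with $m-i$ remaining slots once the $i$-th fact is incorporated into $S.\mathcal{U}$; with the literal $r=m-i-1$ the test is strictly tighter than the bound warrants. This discrepancy is inherited from the pseudo-code rather than introduced by you, but a fully rigorous version would need to pin down whether $S.\mathcal{U}$ in $\mathcal{P}_2$ is the pre- or post-update value and align $r$ accordingly.
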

\begin{proof}
A speech is pruned only if one of the two atoms of macro $\mathcal{P}(b,r)$ evaluates to false. Assume the first atom ($S.\mathcal{U}\geq F.\mathcal{U}$) evaluates to false. This means that a fact with higher single-fact utility follows one with lower single-fact utility. By re-ordering facts, we obtain a speech that satisfies the first atom without changing utility. Assume now that the second atom ($(b-S.\mathcal{U})/r\leq F.\mathcal{U}$) evaluates to false. According to Lemma~\ref{lm:utilityBound}, $S.\mathcal{U}+r\cdot F.\mathcal{U}$ is an upper bound on utility of the completed speech since $r=m-i-1$ (see Algorithm~\ref{exactAlg}, Line~10), $F.\mathcal{U}$ is the single-fact utility of the $i$-th fact ($u_i$ in Lemma~\ref{lm:utilityBound}), and $S.\mathcal{U}$ is an upper bound on the aggregate utility of the first $i-1$ facts (using Lemma~\ref{lm:upperBound}). As $b$ is a lower bound on the optimal utility, having $S.\mathcal{U}+r\cdot F.\mathcal{U}<b$ indicates a fragment that cannot be extended into an optimal speech.
\end{proof}

This immediately implies our main result.

\begin{corollary}
Algorithm~\ref{exactAlg} generates an optimal speech.
\end{corollary}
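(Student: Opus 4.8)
The plan is to derive the corollary directly from Theorem~\ref{th:pruningPreserves} together with the structure of Algorithm~\ref{exactAlg}, essentially by arguing that the algorithm performs an exhaustive search over all fact sets of size at most $m$, modulo (i) permutations of facts within a speech and (ii) speeches discarded by pruning, and that neither of these omissions can cause an optimal speech to be lost.

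First I would set up the induction. Let $F^*$ be an optimal speech for the instance $\langle R, F, m\rangle$, and without loss of generality assume its facts are listed in decreasing order of single-fact utility (reordering does not change $U(F^*)$, since utility is order-independent). I would then show by induction on $i \in \{1,\dots,m\}$ that after iteration $i$ of the loop in Algorithm~\ref{exactAlg}, the table $S$ contains the length-$i$ prefix of $F^*$ (in that canonical order). The base case $i=1$ holds because Line~6 enumerates every single fact in $F$. For the inductive step, assume the length-$(i-1)$ prefix $P$ of $F^*$ is in $S$ after iteration $i-1$. In iteration $i$, the Cartesian product $S\times F$ generates the candidate $P$ extended by the $i$-th fact $f_i$ of $F^*$; this candidate respects the ordering constraint (first atom of $\mathcal{P}$) by our choice of ordering, and it survives the second atom of $\mathcal{P}$ because, by Theorem~\ref{th:pruningPreserves}, any speech that can still be extended into an optimal speech is not pruned — and the length-$i$ prefix of $F^*$ can obviously be completed to $F^*$, which is optimal. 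Hence the length-$i$ prefix of $F^*$ is in $S$ after iteration $i$.

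Taking $i=m$, the full optimal speech $F^*$ is present in $S$ when the loop terminates. Line~13 then recomputes the exact utility $U$ of every speech in $S$ via the join against $R$ (so the approximate bound $\mathcal{U}$ is replaced by the true value), and Line~15 returns $\arg\max_{\mathcal{U}}(S)$, which is a speech of maximal true utility among those in $S$. Since $F^*\in S$ and no speech in $S$ has utility exceeding $U(F^*)$ (as $F^*$ is optimal over all of $F$), the returned speech also has utility $U(F^*)$ and is therefore optimal. I would also note the edge case where $|F| < m$ or where some prefix cannot be extended — there the claim degenerates gracefully, since an ``optimal speech'' of the allowed size still appears.

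The main obstacle I anticipate is making the induction invariant precise enough to be airtight while still matching the somewhat terse relational-operator pseudo-code. In particular, two points need care: the ordering tie-breaking (if several facts share the same single-fact utility, the canonical order of $F^*$ must be chosen consistently so that the first atom $S.\mathcal{U}_P \geq F.\mathcal{U}$ is genuinely satisfied by the prefix-extension step — a weak vs. strict inequality subtlety), and the exact value of the parameter $r = m-i-1$ passed to $\mathcal{P}$ in Line~10 versus the $r$ used in Lemma~\ref{lm:utilityBound}; one must check the index bookkeeping so that ``the fragment $P\cup\{f_i\}$ can be completed to $F^*$'' really is the hypothesis under which Theorem~\ref{th:pruningPreserves} guarantees non-pruning. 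Once those indexing details are reconciled, the corollary follows immediately, which is presumably why the authors state it without a separate proof.
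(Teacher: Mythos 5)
Your proposal is correct and follows essentially the same route as the paper: the paper's proof is a three-sentence argument that the algorithm enumerates all speeches not pruned out, that Theorem~\ref{th:pruningPreserves} guarantees pruning preserves an optimal speech, and that the final exact-utility computation then identifies an optimum. Your explicit induction on prefixes of a canonically ordered optimal speech is just a more careful unpacking of that same argument, and the subtleties you flag (tie-breaking under the non-strict ordering atom and the $r=m-i-1$ bookkeeping) are legitimate details the paper glosses over.
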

\begin{proof}
The algorithm considers all possible speeches that are not pruned out. Due to Theorem~\ref{th:pruningPreserves}, pruning preserves optimal speeches. The algorithm calculates exact utility of each remaining speech and therefore identifies an optimum.
\end{proof}

\section{Greedy Algorithm}
\label{greedySec}

The greedy algorithm finds guaranteed near-optimal speech summaries efficiently. We describe the algorithm in Section~\ref{sub:greedyAlg} and prove its properties in Section~\ref{sub:nearOptimality}.

%We present a greedy algorithm for data summarization. We describe the base algorithm in Section~\ref{sub:greedyAlg}. We show in Section~\ref{sub:nearOptimality} that this algorithm already guarantees near-optimal speech summaries. %Finally, we describe extensions in Section~\ref{sub:extensions}.

\subsection{Algorithm}
\label{sub:greedyAlg}

\begin{algorithm}[t]
\renewcommand{\algorithmiccomment}[1]{// #1}
\begin{algorithmic}[1]
\State \Comment{Find approximately optimal summary for}
\State \Comment{relation $R$ using $m$ facts from $F$.}
\Function{ApproximateSummary}{$F,R,m$}
%\State \Comment{Initialize user expectations}
%\State $R\gets\Pi_{\mathcal{D},R}(R)$
\State \Comment{Iterate over speech facts}
\For{$i\gets1,\ldots,m$}
\State \Comment{Calculate aggregate utility for each fact}
\State $U\gets\Gamma_{\sum\mathcal{U},F}(R\Join_{\mathcal{M}}F)$
\State \Comment{Select maximum utility fact}
\State $f_i^*\gets\arg\max_{f\in F}(U)$
\State \Comment{Recalculate user expectation}
\State $R\gets\Pi_{\mathcal{E},R}(R\Join_{\mathcal{M}} f_i^*)$
\EndFor
\State \Comment{Combine max-utility facts}
\State \Return{$f_1^*\Join\ldots\Join f_m^*$}
\EndFunction
\end{algorithmic}
\caption{Greedily add most useful facts to obtain guaranteed near-optimal speech summaries.\label{approxAlg}}
\end{algorithm}

The greedy algorithm generates speeches by iteratively adding facts, starting from an empty speech. In each iteration, it greedily adds the fact that increases utility by the highest amount. This algorithm is more efficient than exhaustive search as it avoids considering fact combinations for speech expansions. Also, this seemingly simple strategy guarantees speeches within a factor of $(1-1/e)$ of the optimal utility.

Algorithm~\ref{approxAlg} shows the associated pseudo-code. It uses the same basic operators as Algorithm~\ref{exactAlg}. Given a relation $R$ to summarize with up to $m$ facts from $F$, the algorithm returns a near-optimal combination of facts. In each iteration (Line~7), the algorithm calculates utility gain of each fact (i.e., added utility when expanding the current speech by that fact). This is realized by a join, pairing facts with data rows within their scope (join condition $\mathcal{M}$), followed by aggregating utility gain for each fact over all rows. After identifying the fact with maximal added utility (Line~9), the algorithm recalculates user expectations based on the expanded speech (Line~11). Here, $\mathcal{E}$ represents an SQL expression that calculates the value, expected by users after listening to the current speech, according to our model (see Section~\ref{modelSec}). The resulting values are stored as a column of the updated relation $R$ (and initialized with the prior). They are used for calculating utility gain of facts in Line~7. Finally, the combination of locally optimal facts is returned.

\begin{example}
We consider our running example, illustrated in Figure~\ref{fig:delays}. We calculate utility under the same assumptions as in Example~\ref{ex:utility}. We consider all facts on average delay describing flights within a specific region or season or both. Considering those facts, the greedy algorithm selects either the fact referencing flights in Winter or the one referencing flights in the North (both tied with a maximal utility of 40). In the second iteration, it will select the other one of the two aforementioned facts (now with a maximal utility gain of 25). Other facts, e.g. referencing flights in the South in Summer, with utility 20, are dominated.
\end{example}

\subsection{Proof of Near-Optimality}
\label{sub:nearOptimality}

%We show that this algorithm guarantees near-optimal utility. 

%Intuitively, this means that adding more facts yields less and less information. 
We prove that Algorithm~\ref{approxAlg} produces near-optimal speeches.

\begin{theorem}
Algorithm~\ref{approxAlg} produces speeches with utility within factor $(1-1/e)$ of the optimum.
\end{theorem}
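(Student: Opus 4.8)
The plan is to reduce the statement to the classical theorem on greedy maximization of monotone submodular functions subject to a cardinality constraint, which guarantees a $(1-1/e)$ approximation. By Theorem~\ref{th:diminishing}, speech utility $U$ is submodular as a set function on the ground set $F$. Two further properties are needed to invoke the classical result: $U$ must be monotone non-decreasing (adding a fact never decreases utility), and $U(\emptyset)=0$. The former follows because, for any fact $f$, the user's expected value for a row $r$ is chosen as the element of $V_r$ closest to $v_r$, and adding $f$ only enlarges $V_r$, so the per-row deviation $D(F,r)$ can only shrink; summing over rows, $D(F\cup\{f\})\le D(F)$, hence $U(F\cup\{f\})\ge U(F)$. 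The latter holds by definition, since $U(\emptyset)=D(\emptyset)-D(\emptyset)=0$.

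First I would state and cite the Nemhauser--Wolsey--Fisher theorem: for a monotone submodular function $g$ with $g(\emptyset)=0$, the greedy algorithm that repeatedly adds the element of maximal marginal gain produces, after $m$ steps, a set $F_g$ with $U(F_g)\ge(1-1/e)\,U(F^*)$, where $F^*$ is the optimal size-$m$ set. Then I would verify that Algorithm~\ref{approxAlg} is exactly this greedy procedure: in each iteration it recomputes, for every candidate fact, the aggregate utility \emph{gain} relative to the current speech (the recalculation of user expectations in Line~11 is precisely what makes the utility computed in Line~7 a marginal gain rather than a standalone utility), and Line~9 selects the fact with maximal marginal gain. After establishing the correspondence, the approximation bound is immediate from the cited theorem.

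I would also include, for completeness, a brief self-contained version of the standard inductive argument, since it is short: letting $F_i$ denote the greedy speech after $i$ facts and $U^*=U(F^*)$, submodularity and monotonicity give $U^*\le U(F_i)+\sum_{f\in F^*}\bigl(U(F_i\cup\{f\})-U(F_i)\bigr)\le U(F_i)+m\cdot\bigl(U(F_{i+1})-U(F_i)\bigr)$, because the greedy choice maximizes the marginal gain over all $f$ and there are at most $m$ facts in $F^*$. Rearranging yields $U^*-U(F_{i+1})\le(1-1/m)\,(U^*-U(F_i))$, and iterating from $U(F_0)=U(\emptyset)=0$ gives $U^*-U(F_m)\le(1-1/m)^m U^*\le e^{-1}U^*$, i.e.\ $U(F_m)\ge(1-1/e)U^*$.

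The main obstacle is not the approximation argument itself, which is textbook, but the two auxiliary verifications: establishing monotonicity of $U$ (which the excerpt asserts is desirable but proves only submodularity), and confirming that Algorithm~\ref{approxAlg} genuinely computes marginal gains at each step rather than absolute single-fact utilities. The monotonicity check requires care because the user model takes an $\arg\min$ over $V_r$, so one must argue that enlarging $V_r$ weakly decreases the minimized distance; the argument mirrors case~1/case~2 split used in the proof of Theorem~\ref{th:diminishing}. One subtlety worth flagging: the classical bound requires the greedy algorithm to be allowed to pick exactly $m$ elements and assumes ties are broken arbitrarily, which matches the $\arg\max$ in Line~9; and if the optimal set has fewer than $m$ facts (because extra facts add zero utility), the bound still holds since one can pad $F^*$ with zero-gain facts without changing $U(F^*)$.
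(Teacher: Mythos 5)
Your proposal is correct and takes essentially the same route as the paper: both reduce the claim to the classical Nemhauser--Wolsey--Fisher guarantee for greedy maximization of a non-negative, monotone, submodular set function under a cardinality constraint, using Theorem~\ref{th:diminishing} for submodularity. You are somewhat more careful than the paper in explicitly verifying monotonicity and that Line~7/Line~11 of Algorithm~\ref{approxAlg} compute true marginal gains, but the argument is the same.
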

\begin{proof}
Utility is non-negative, monotone, and sub-modular (according to Theorem~\ref{th:diminishing}). Algorithm~\ref{approxAlg} greedily selects a bounded number of facts for each scope. Doing so guarantees the postulated optimality factor~\cite{Nemhauser1978}.
\end{proof}

\begin{comment}
\subsection{Extensions}
\label{sub:extensions}

To increase readability, we have simplified the pseudo-code, compared to the actual implementation. Next, we give a short overview of how the two differ. First, the pseudo-code summarizes one single data set. In the implementation, we generally summarize batches of data subsets. For that, we consider data subsets that are defined by equality conditions. Given a fixed set of restricted columns, each data row falls into one of those data subsets. Hence, we can efficiently generate summaries for all subsets by calculating utility per fact and per subset via a group-by operation. The total number of queries executed is significantly reduced, compared to treating data subsets separately. %In the voice interface deployment, speech input queries are mapped to specific data subsets whose speeches are generated in the aforementioned way.

Second, our actual implementation supports a more diverse set of fact types. In addition to facts describing averages, we support facts describing minimal and maximal bounds on the distribution of target values in row subsets. We can show sub-modularity for an extended utility model, based on the latter types of facts, as well. The formal guarantees of the presented algorithm therefore generalize to different types of facts. We omit details on extensions due to space restrictions and to increase readability. Both extensions, i.e.\ batch processing and more diverse fact types, can be applied to the algorithms described in the following sections as well.
\end{comment}

\section{Pruning Facts}
\label{pruningSec}

We show how to prune facts early for greedy speech construction. Section~\ref{sub:pruningOverview} gives an overview, Section~\ref{sub:pruning} describes the pruning mechanics in detail, Sections~\ref{sub:costModel} and \ref{sub:pruningOptimizer} provide details on the cost-based optimizer.

%Section~\ref{sub:pruning} describes the general technique according to which facts are pruned. Section~\ref{sub:costModel} introduces a cost model that allows to compare alternative pruning strategies. Finally, Section~\ref{sub:pruningOptimizer} introduces a pruning plan optimizer based on that cost model.

\subsection{Overview}
\label{sub:pruningOverview}

The greedy algorithm must select the fact with maximal utility gain in each iteration (otherwise, the formal guarantees on finding near-optimal speeches do not apply). A naive method calculates utility gain for each fact to determine the maximum. Doing so requires pairing up data rows with facts (i.e., a join in Algorithm~\ref{approxAlg}), an expensive operation. In some cases, we can conclude more efficiently that fact groups do not yield maximal utility. To do so, we compare an upper bound on utility gain of facts (calculated without a join) against utility of other facts. This creates overheads for calculating bounds and for comparisons. Those overheads only pay off if they remove facts from further considerations. Hence, we use cost-based planning to decide if and how to try excluding facts for minimal processing costs. The following example illustrates the high-level principle.

\begin{example}\label{ex:factPruning}
We consider our running example from Figure~\ref{fig:delays}, calculating utility as in Example~\ref{ex:utility}. Assume the greedy algorithm selected the fact stating average delays in Winter (15 minutes) in the first iteration. In the second iteration, our goal is to efficiently identify a fact with maximal utility gain. By summing up absolute differences between expected and actual delays for a specific season (expectation is influenced by priors and the first fact), we obtain upper bounds on utility gain of any fact referencing that season. Similarly, we obtain upper bounds for facts referencing specific regions. For instance, facts referencing Fall have an upper bound of 10 and facts referencing the East cannot increase utility by more than five (deviation between actual and expected delay in the East in Winter). Assume we calculate utility gain of the fact stating average delays in the North (15 minutes) first. Based on its utility gain (25) and the upper bounds, we can exclude all other facts from further consideration. Calculating utility of other facts first is less effective for pruning.
\end{example}

\subsection{Pruning Method}
\label{sub:pruning}

Algorithm~\ref{utilityAlg} describes the pruning mechanism (it replaces Line~7 in Algorithm~\ref{approxAlg}). We prune facts at the granularity of fact groups, characterized by the set of restricted dimension columns (denoted as $R.Dims$ in Algorithm~\ref{utilityAlg}). For instance, in the context of Example~\ref{ex:factPruning}, we consider all facts referencing specific regions (but no specific seasons) as one group. Algorithm~\ref{utilityAlg} first determines the set of fact groups (Line~5) and an optimal pruning strategy (Line~7), as discussed in the following subsections. A pruning strategy consists of a source $S$ and a target $T$ ($F(S)$ and $F(T)$ denote facts associated with the corresponding groups). The pruning source is a set of fact groups whose utility is calculated first. Then, the maximal utility gain of any source fact (denoted as $m$ in Algorithm~\ref{utilityAlg}) is used to prune target facts, based on their upper utility bounds. Upper bounds are calculated by summing up absolute deviation between expectation and correct values (denoted as $\mathcal{D}$) over all data rows (Line~15), grouping by values in dimension columns for a fixed fact group. Clearly, adding a fact can at most decrease error to zero in the data region the fact refers to. This implies the upper bound on utility gain. If the target group is dominated (check in Line~17), we prune not only the target group but also its specializations. The specialization of a target group restricts a strict superset of dimension columns. Specializing a fact reduces its scope to a data subset. Hence, upper utility bounds (obtained by summing deviation over all rows within the scope of a fact) apply to fact specializations as well. Finally, utility is calculated for all remaining fact groups.

\begin{algorithm}[t]
\renewcommand{\algorithmiccomment}[1]{// #1}
\begin{algorithmic}[1]
\State \Comment{Prune efficiently to interesting fact groups from $F$}
\State \Comment{before calculating utility for summarizing relation $R$.}
\Function{Utility}{$F,R$}
\State \Comment{Collect all available fact groups}
\State $G\gets$\Call{PowerSet}{$R.Dims$}
\State \Comment{Determine optimal pruning strategy}
\State $\langle S,T\rangle\gets$\Call{OptPrune}{$F,R$}
\State \Comment{Calculate utility for pruning source}
\State $m\gets\max(\Pi_{\mathcal{U}}(\Gamma_{\sum\mathcal{U},F}(R\Join_{\mathcal{M}}F(S))))$
\State \Comment{Iterate over pruning targets}
\For{$t\in T$}
\State \Comment{Is target unpruned?}
\If{$t\in G$}
\State \Comment{Calculate upper utility bound}
\State $u\gets \max(\Pi_{\mathcal{D}}(\Gamma_{\sum\mathcal{D},t}(R)))$
\State \Comment{Source dominates target?}
\If{$m>u$}
\State \Comment{Prune target and all specializations}
\State $G\gets G\setminus \{g\in G|t\subseteq g\}$
\EndIf
\EndIf
\EndFor
\State \Comment{Return aggregate utility for interesting facts}
\State \Return{$\Gamma_{\sum\mathcal{U},F}(R\Join_{\mathcal{M}}F(G))$}
\EndFunction
\end{algorithmic}
\caption{Reduce overheads for utility calculations by applying optimized fact pruning strategy.\label{utilityAlg}}
\end{algorithm}
\subsection{Cost Model}
\label{sub:costModel}

We introduce a cost model for pruning choices. This model estimates processing cost given pruning source and target. For facts in a group $g$, we denote by $C_U(g)$ the estimated cost of calculating utility of each fact (this requires a join between facts and data rows). By $C_D(g)$, we denote estimated cost for calculating deviation between expected and actual values for row groups (this requires a group-by query without joins). Both estimates can be obtained via the query optimizer cost model. Denoting by $P_g$ the event that group $g$ is pruned, we can estimate data processing cost of Algorithm~\ref{utilityAlg} as

\begin{equation*}
\sum_{s\in S}(C_U(s))+\sum_{t\in T}(C_D(t))+\sum_{g\in G\setminus S}(\Pr(\neg P_g)\cdot C_U(g))
\end{equation*}

. The first term represents cost for calculating bounds based on pruning sources. The second term represents cost of calculating bounds for the pruning targets (we simplify by assuming that all target groups are treated). The last term represents cost of calculating utility for facts that remain after pruning. It depends on the probability $\Pr(\neg P_g)$ that a group $g$ is not pruned.

%\begin{equation*}\Pr(\neg P_g)=\prod_{s\in S}\prod_{t\in T:t\subseteq g}(1-\Pr(P_{s\rightarrow t}))\end{equation*}

A fact group may be pruned if it was a pruning target or if it specializes a pruning target. The following formula covers both possibilities: $\Pr(\neg P_g)=\Pr(\nexists t\in T:t\subseteq g\wedge P_t)$. It can be expanded into $\Pr(\nexists s\in S,t\in T:t\subseteq g\wedge P_{s\rightarrow t})$, denoting by $P_{s\rightarrow t}$ the event that the upper utility bound for facts $t$ is below the lower bound for facts in $s$. We simplify by assuming independence between different pruning outcomes, obtaining $\Pr(\neg P_g)=\prod_{s\in S}\prod_{t\in T:t\subseteq g}(1-\Pr(P_{s\rightarrow t}))$. Finally, we estimate probability for $P_{s\rightarrow t}$. Modeling utility bounds per fact as a sum over i.i.d. random variables representing utility per row, it approaches a normal distribution as the number of rows grows (due to the Central Limit Theorem). We assume that per-row utility follows the same distribution, independently of the fact group and of the type of bound calculated. Hence, the per-fact utility distribution only depends on the number of rows within its scope. We simplify by assuming a uniform distribution of rows over dimension values. Then, the number of rows that are within the scope of a fact is inversely proportional to the number of facts in the fact group. We can estimate the number of facts in group $s$ and $t$, denoted by $M(s)$ and $M(t)$ in the following, by referring to query optimizer statistics. The number of facts simply equals the number of distinct value combinations in the dimension columns they restrict. Finally, we assume that the variance of the per-fact utility distribution is fixed and given by $\sigma^2$. Under those assumptions, we can express pruning probability by comparing two normal distributions:

\begin{small}
\begin{equation*}
\Pr(P_{s\rightarrow t})=\Pr(u_s>u_t|u_s\sim\mathcal{N}(\frac{1}{M(s)},\sigma^2),u_t\sim\mathcal{N}(\frac{1}{M(t)},\sigma^2))
\end{equation*}
\end{small}

. Here, $\mathcal{N}(\mu,\sigma^2)$ designates the normal distribution with mean $\mu$ and variance $\sigma^2$. This cost model is based on various simplifying assumptions. Nevertheless, we will show experimentally that it is sufficient to avoid bad pruning plans.

\subsection{Pruning Optimization}
\label{sub:pruningOptimizer}

\begin{algorithm}[t]
\renewcommand{\algorithmiccomment}[1]{// #1}
\begin{algorithmic}[1]
\State \Comment{Generate plans for pruning facts $F$ on relation $R$.}
\Function{Plans}{$F,R$}
\State \Comment{Initialize set of plan candidates}
\State $P\gets\emptyset$
\State \Comment{Collect available fact groups}
\State $G\gets$\Call{PowerSet}{$R.Dims$}
\State \Comment{Iterate over pruning sources}
\For{$S\subseteq G:\nexists s\in S,g\in G\setminus S:M(g)<M(s)$}
\State \Comment{Initialize pruning target set}
\State $T\gets\emptyset$
\State \Comment{Initialize pruning targets left}
\State $L\gets G\setminus S$
\State \Comment{Iterate until no targets left}
\While{$L\neq\emptyset$}
\State \Comment{Select next pruning target}
\State $t\gets\arg\max_{t\in L}(H(t,S,L))$
\State \Comment{Add to pruning targets}
\State $T\gets T\cup\{t\}$
\State \Comment{Add corresponding plan candidate}
\State $P\gets P\cup\{\langle S,T\rangle\}$
\State \Comment{Discard specialization groups}
\State $L\gets L\setminus\{g\in G|t\subseteq g\}$
\EndWhile
\EndFor
\State \Return{$P$}
\EndFunction
\end{algorithmic}
\caption{The optimal pruning plan is selected from plan candidates generated by this algorithm.\label{planAlg}}
\end{algorithm}

We select pruning plans based on the cost model presented in the last subsection. Function~\textproc{OptPrune}, used in Algorithm~\ref{utilityAlg}, returns the minimum cost plan among a set of candidates. The number of candidate plans grows exponentially in the number of fact groups. To reduce optimization overheads, we use several heuristic to obtain a smaller set of candidate plans, calculated by Algorithm~\ref{planAlg}. 

When selecting pruning sources, Algorithm~\ref{planAlg} prioritizes fact groups with few member facts. For such groups, the expected utility is higher as each fact tends to cover more rows. For each possible pruning source, Algorithm~\ref{planAlg} considers multiple possible pruning target sets. Pruning targets are selected according to a heuristic function $H$. Given pruning sources $S$ and remaining groups $L$, the value of a pruning target $t$ is given by $H(t,S,L)=\Pr(P_t)\cdot|\{l\in L:t\subseteq l\}|$. This function estimates the expected number of fact groups that can be removed after calculating bounds for target $t$. After selecting the next pruning target, all fact groups specializing the target group are removed from further consideration. If the target group can be successfully pruned, those fact groups would be implicitly pruned as well. Hence, we exclude them from further consideration. Each combination of a source and target set yields a new plan candidate. %At the end of the function, all generated plans are returned. 

\section{Complexity Analysis}
\label{analysisSec}

We analyze the complexity class of speech summarization. 

\begin{theorem}
Speech summarization is NP-hard.
\end{theorem}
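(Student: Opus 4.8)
The plan is to establish NP-hardness by a polynomial-time reduction from a classical NP-hard problem. The structure of the Speech Summarization Problem---select up to $m$ elements from a ground set of facts to maximize a monotone submodular coverage-style objective---strongly suggests a reduction from \textbf{Maximum Coverage} (or, equivalently, from \textbf{Set Cover} in its decision form, or from \textbf{Vertex Cover}). Concretely, I would take an instance of Maximum Coverage: a universe $\mathcal{X}=\{x_1,\ldots,x_n\}$, a family of subsets $\mathcal{S}_1,\ldots,\mathcal{S}_k\subseteq\mathcal{X}$, and a budget $m$; the question is whether $m$ of the subsets suffice to cover at least $t$ elements.

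The core of the construction is to encode each universe element $x_j$ as a relation row and each subset $\mathcal{S}_i$ as an available fact. First I would set up dimension columns so that a distinct fact $f_i$ can be made ``within scope'' for exactly the rows corresponding to elements of $\mathcal{S}_i$; this is the step that needs the most care, since the scope mechanism is defined by conjunctions of equality predicates on dimension columns rather than by arbitrary subsets. One clean way is to introduce one binary dimension column $c_i$ per subset $\mathcal{S}_i$, set $c_i=1$ in row $x_j$ iff $x_j\in\mathcal{S}_i$ and $c_i=0$ otherwise, and let fact $f_i$ have scope $\{\langle c_i,1\rangle\}$; then $f_i$ covers precisely the rows for $\mathcal{S}_i$. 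Next I would choose the target values and the prior so that each covered row contributes exactly one unit of utility and uncovered rows contribute zero: e.g., set $v_r=1$ and prior $P(r)=0$ for every row, and set each fact's typical value to $1$ (the fact's typical value is the average target over its scope, which is $1$ here by construction, so this is consistent with the Fact definition). Under the user model from Section~\ref{modelSec}, a row in scope of at least one selected fact then has expected value $1$ (closest to target), hence zero deviation, while a row in no selected fact's scope keeps prior $0$, hence deviation $1$; so $D(\emptyset)=n$ and $U(F^*)=\#\{\text{rows covered by }F^*\}$.

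With this encoding, a set of $m$ facts achieves utility at least $t$ if and only if the corresponding $m$ subsets cover at least $t$ universe elements, so the decision version of Speech Summarization is at least as hard as Maximum Coverage, which is NP-hard; since the reduction is clearly polynomial (the relation has $n$ rows and $k$ dimension columns, and there are $k$ facts), NP-hardness of Speech Summarization follows. I expect the main obstacle to be verifying that the $\arg\min$ user-expectation model and the ``typical value = scope average'' constraint interact correctly with the construction---in particular, ensuring the prior $0$ is genuinely farther from the target than the fact value $1$ for covered rows (trivially true here) and that no unintended fact in $F$ needs to be included, which is handled by making $F$ contain \emph{only} the $k$ subset-facts (the problem statement lets $F$ be an arbitrary given set of available facts, so this is legitimate). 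A secondary point worth checking is the constraint $|F^*|=m$ versus ``up to $m$'': since utility is monotone, padding with redundant or zero-utility facts is harmless, so the equality constraint does not change the reduction. If one prefers to avoid one dimension column per set, an alternative is a reduction from Vertex Cover using two dimension columns (endpoints), encoding edges as rows and vertices as facts, but the per-set-column construction above is the most transparent.
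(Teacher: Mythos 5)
Your proposal is correct and matches the paper's proof in all essentials: the paper also reduces from Set Cover by introducing one dimension column per subset, one row per universe element, a fact of value $1$ per subset scoped to exactly its member rows, prior $0$ and target $1$, so that zero deviation is achievable with $m$ facts iff $m$ sets cover the universe. Your Maximum-Coverage phrasing (utility $\geq t$ iff $\geq t$ elements covered) is a mild generalization of the same construction, not a different argument.
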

\begin{IEEEproof}
We use a reduction from set cover. An instance of set cover is defined by a universe $U$, a set $S$ of subsets of $U$, and an integer $m$. The decision variant asks whether $U$ can be covered with $m$ elements from $S$. We reduce to speech summarization as follows. We use a relation $R$ with one row for each element in $U$. For each subset $s\in S$ (with $s\subseteq U$), we introduce a candidate fact $F_s$ with value 1. That fact restricts the dimension columns such that exactly rows $R_s\subseteq R$, associated with elements from $s$, fall within its scope. We introduce one relation column $C_s$, associated with $s$, such that only rows $R_s$ are set to a unique value $v_s$ in $C_s$ and $F_s=\langle \{\langle C_s,v_s\rangle\},1\rangle$. We set a uniform prior $P(r)=0$ for all rows, the target value is uniformly set to one. We can achieve a deviation of zero if and only if each row falls within the scope of at least one fact. If the optimal speech with $m$ facts has deviation zero then $U$ can be covered with $m$ sets from $S$. The reduction has polynomial time complexity.
\end{IEEEproof}

%Hence, unless $P=NP$, no polynomial time algorithm can generate optimal speech summaries. 

Next, we analyze time and space complexity of the proposed algorithms. We denote by $n=|R|$ the number of rows in the relation to summarize, by $m$ the maximal number of facts to select, and by $k=|F|$ the number of fact candidates. We measure time and space complexity by the number of rows (or row combinations) processed or stored. %as the number of rows processed (rather than the amount of data read or written).

\begin{theorem}
Algorithm~\ref{exactAlg} has time complexity $O(n\cdot \binom{k}{m})$.\label{th:exactTime}
\end{theorem}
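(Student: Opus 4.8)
The plan is to bound the number of rows (or row-combinations) processed in each phase of Algorithm~\ref{exactAlg} and show the dominant term is $O(n\cdot\binom{k}{m})$. First I would handle the single-fact initialization in Line~6: the join $R\Join_{\mathcal{M}}F$ pairs each of the $n$ rows with each of the $k$ facts, so this costs $O(n\cdot k)$, and the subsequent grouping is bounded by the same quantity. Since $k\leq\binom{k}{m}$ for the relevant regime (and in any case $O(n\cdot k)$ is dominated by $O(n\cdot\binom{k}{m})$ whenever $m\geq1$), this phase is absorbed into the claimed bound.

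Next I would analyze the iterative loop (Lines~8--11). The key observation is the pruning rule that enforces facts appear in decreasing order of single-fact utility: this means the set $S$ after iteration $i$ contains at most one permutation per size-$i$ subset of $F$, i.e.\ $|S|\leq\binom{k}{i}$. The Cartesian product $S\times F$ in Line~10 therefore has at most $\binom{k}{i}\cdot k$ entries, and summing $\sum_{i=2}^{m}\binom{k}{i}\cdot k$ is $O(k\cdot\binom{k}{m})$ — but this counts row-combinations in the expansion table, not data rows processed. The projection and selection operate row-by-row on this table without touching $R$, so their cost is $O(k\cdot\binom{k}{m})$ as well; I would argue this is dominated by $O(n\cdot\binom{k}{m})$ under the implicit assumption that the relation is at least as large as the fact count (or simply fold the $k$ into the bound). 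Then the final utility recomputation in Line~13 joins $R$ against the surviving speeches $S$, of which there are at most $\binom{k}{m}$, giving $O(n\cdot\binom{k}{m})$ — this is the dominant term.

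The main obstacle I anticipate is making the counting of $|S|$ rigorous despite pruning: one must be careful that the bound $|S|\leq\binom{k}{i}$ holds \emph{before} pruning removes candidates (pruning only shrinks $S$, so the bound is safe as an upper bound), and that the permutation-elimination argument (Section~\ref{exactPruningSub}) genuinely yields at most one representative per subset rather than merely reducing the count. A secondary subtlety is reconciling "number of rows processed" across operators of different arities — the join in Line~13 processes $n\cdot|S|$ pairs while the Cartesian product in Line~10 processes $|S|\cdot k$ pairs — so I would state up front that complexity is measured by the total number of (row or row-combination) tuples materialized across all operators, under which convention the sum telescopes to the stated bound with the Line~13 join dominating. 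I would close by noting $\binom{k}{m}$ absorbs all lower-order binomials $\binom{k}{i}$ for $i<m$ when $m\leq k/2$, and handle the opposite regime by symmetry of binomial coefficients or by treating small $k-m$ separately.
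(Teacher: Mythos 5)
Your proposal is correct and follows essentially the same route as the paper's proof: assume in the worst case that only the permutation-elimination pruning is effective, bound the surviving speeches by $\binom{k}{m}$, and observe that the final join of these speeches against the $n$ data rows dominates the earlier operations. You are in fact somewhat more explicit than the paper about why the intermediate Cartesian products and the initialization are absorbed into the final term, which the paper dismisses with a brief "accumulated cost of prior joins is negligible."
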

\begin{IEEEproof}
Under worst case assumptions, only the first of the two pruning condition (eliminating redundant fact permutations) is effective. Then, the number of partial speeches grows quickly in the number of iterations. This means that the accumulated cost of prior join operations is negligible, compared to the cost of the final join. Selecting $m$ facts out of $k$ candidates yields $O(\binom{k}{m})$ possibilities. The last join pairs potentially optimal speeches with $n$ data rows.  Assuming nested loops joins, its complexity is in $O(n\cdot\binom{k}{m})$.
\end{IEEEproof}

\begin{theorem}
Algorithm~\ref{approxAlg} has time complexity $O(m\cdot n\cdot k)$.
\end{theorem}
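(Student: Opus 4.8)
The plan is to bound the cost of Algorithm~\ref{approxAlg} iteration by iteration, following the same accounting used for Theorem~\ref{th:exactTime}. The outer loop runs exactly $m$ times (Line~5), so it suffices to show that each iteration costs $O(n\cdot k)$ and then multiply. Within one iteration the two non-trivial operations are the utility computation in Line~7 and the expectation update in Line~11; everything else (the $\arg\max$ in Line~9, forming the final join in Line~14) is dominated by these.

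First I would analyze Line~7: $U\gets\Gamma_{\sum\mathcal{U},F}(R\Join_{\mathcal{M}}F)$. The join $R\Join_{\mathcal{M}}F$ pairs each of the $n$ data rows against each of the $k$ candidate facts to test the match condition $\mathcal{M}$, which, assuming nested loops joins as in the proof of Theorem~\ref{th:exactTime}, costs $O(n\cdot k)$ row-pair evaluations. The subsequent grouping and sum aggregation $\Gamma_{\sum\mathcal{U},F}$ processes at most $O(n\cdot k)$ intermediate tuples and produces one utility value per fact, hence is also $O(n\cdot k)$. Next I would observe that Line~11, $R\gets\Pi_{\mathcal{E},R}(R\Join_{\mathcal{M}} f_i^*)$, joins the $n$ rows against a single fact $f_i^*$, costing $O(n)$, which is absorbed into $O(n\cdot k)$. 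So each iteration is $O(n\cdot k)$.

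Finally I would combine: $m$ iterations at $O(n\cdot k)$ each gives $O(m\cdot n\cdot k)$, and the returned combination $f_1^*\Join\ldots\Join f_m^*$ in Line~14 merely assembles $m$ already-selected facts into one speech, contributing a lower-order term. I expect the only mild subtlety — not really an obstacle — to be justifying that the intermediate relation size stays $O(n\cdot k)$ rather than blowing up: the grouping in Line~7 collapses the join output back to $k$ rows before the next iteration, and $R$ retains exactly $n$ rows (Line~11 is a projection after a join on one fact), so no accumulation occurs across iterations. Stating the nested-loops-join assumption explicitly, as the earlier proof does, closes the argument.
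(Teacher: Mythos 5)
Your argument is correct and matches the paper's own proof, which likewise identifies the join between the $n$ data rows and the $k$ candidate facts as the dominant operation, bounds it by $O(n\cdot k)$ under a nested loops join, and multiplies by the $O(m)$ iterations. Your additional accounting for the grouping, the Line~11 update, and the final combination only makes explicit what the paper leaves implicit.
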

\begin{IEEEproof}
The operation with dominant complexity is the join between data rows and facts. A nested loops join has complexity $O(n\cdot k)$. We perform $O(m)$ iterations.
\end{IEEEproof}

\begin{comment}
As outlined in Section~\ref{sub:greedyAlg}, we can alternatively partition facts and create indices to speed up the join between data and candidate facts. This can be beneficial if the number of iterations is high. 

\begin{theorem}
Algorithm~\ref{approxAlg} with index nested loops joins has time complexity $O(n\cdot 2^d+m\cdot n\cdot k\cdot\sigma)$ time.
\end{theorem}
\begin{IEEEproof}
The operation with dominant time complexity is the join between data rows and candidate facts. Partitioning facts before iterating is in $O(k)$ time. Creating an index on dimension columns for each fact partition is in $O(n\cdot 2^d)$ time. Join complexity reduces to $n\cdot k\cdot \sigma$ as matching tuples can be retrieved efficiently, we perform $O(m)$ such joins. 
\end{IEEEproof}

Next, we analyze the complexity of calculating guaranteed optimal speeches.
\end{comment}

%Finally, we analyze space complexity.

\begin{theorem}
Algorithm~\ref{exactAlg} is in $O(\binom{k}{m})$ space.
\end{theorem}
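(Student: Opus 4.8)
The plan is to bound the maximum number of partial speeches that Algorithm~\ref{exactAlg} ever stores simultaneously in the working set $S$, and to argue that no other data structure in the algorithm requires more space. The algorithm proceeds in iterations $i = 2, \ldots, m$; at iteration $i$ the table $S$ holds partial speeches consisting of exactly $i$ facts. After the first pruning condition (enforcing a fixed decreasing order of single-fact utility) is applied, each stored partial speech corresponds to a \emph{distinct} size-$i$ subset of the $k$ candidate facts, since the ordering of facts within a speech is uniquely determined. Hence at iteration $i$ the table $S$ contains at most $\binom{k}{i}$ rows, and over all iterations the working set never exceeds $\max_{1 \le i \le m}\binom{k}{i}$ rows. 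Since $m \le k/2$ is the relevant regime (and more generally $\binom{k}{i} = O(\binom{k}{m})$ for the $i$ values actually used, up to constant factors absorbed by the $O$-notation), this is $O(\binom{k}{m})$.

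First I would note that each row of $S$ stores a fixed-size record: the set of (at most $m$) facts in the partial speech, the running upper-bound quantity $S.\mathcal{U}$, and the single-fact utility $S.\mathcal{U}_P$ of the last added fact. All of these occupy $O(m) = O(k)$ space per row, which is a lower-order factor; counting by ``rows or row combinations'' as the paper does, each partial speech counts as one unit. Next I would observe that the Cartesian product $S \times F$ formed inside Line~10 is consumed by the projection/selection pipeline and need not be materialized in full — a streaming implementation pairs each speech with each fact, applies $\mathcal{P}$, and emits survivors — so its transient size does not dominate; even if materialized, $|S \times F| \le \binom{k}{i-1}\cdot k$, which is again $O(\binom{k}{m}\cdot k)$ and, treating $k$ as the number-of-rows unit in a join, consistent with the claimed order up to the same conventions used in Theorem~\ref{th:exactTime}. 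Finally, the input relation $R$ of $n$ rows and the fact table $F$ of $k$ rows are fixed inputs; the join in Lines~6 and~13 produces intermediate results that can also be aggregated in a streaming fashion, so the grouping-and-aggregation step stores only one accumulator per group, i.e.\ $O(\binom{k}{m})$ accumulators for the final join.

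The main obstacle is pinning down the bound on $|S|$ cleanly: one must invoke that the first pruning atom $S.\mathcal{U}_P \ge F.\mathcal{U}$ collapses all $i!$ orderings of any $i$-element fact set to a single canonical representative (breaking ties consistently so that exactly one survives), which is precisely why the count is $\binom{k}{i}$ rather than $k^i$ or $\binom{k}{i}\cdot i!$. It should be made explicit that this reduction is sound — i.e.\ that the algorithm still stores at least one representative per subset that could matter — but for a pure space bound only the upper side matters: even without the ordering pruning, $|S| \le k^i$, which for $i \le m$ is not obviously $O(\binom{k}{m})$, so the ordering argument is genuinely needed. I would therefore structure the proof as: (i) $|S| \le \binom{k}{i} \le O(\binom{k}{m})$ at iteration $i$ by the canonical-ordering argument; (ii) per-row storage and all intermediate join/aggregation results are $O(\binom{k}{m})$ in the row-counting model, using streaming where the naive materialization would be larger by only a factor $k$ that the paper's accounting folds in; (iii) conclude the stated overall space bound.
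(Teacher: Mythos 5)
Your proposal is correct and follows essentially the same route as the paper, which simply observes that the table of candidate speeches dominates space consumption and has $O(\binom{k}{m})$ rows by the same permutation-pruning argument used for the time bound (Theorem~\ref{th:exactTime}); your additional care about streaming the Cartesian product and the per-row record size is detail the paper leaves implicit. The only soft spot, which the paper shares, is the parenthetical claim that $\binom{k}{i}=O(\binom{k}{m})$ for intermediate $i$ — this needs $m\leq k/2$ (true in all practical configurations here, where $m$ is small and $k$ is large), so it is not a gap relative to the paper's own argument.
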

\begin{IEEEproof}
The relation with dominant space requirements is the one storing candidate speeches. It has $O(\binom{k}{m})$ rows (using a similar reasoning as for Theorem~\ref{th:exactTime}).
\end{IEEEproof}

\begin{theorem}
Algorithm~\ref{approxAlg} is in $O(n+m+k)$ space.
\end{theorem}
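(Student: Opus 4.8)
The plan is to account for the dominant memory consumers during execution of Algorithm~\ref{approxAlg} and show each fits within one of the three additive terms. First I would observe that the algorithm never materializes a relation of combinatorial size: unlike the exact algorithm, it maintains only singleton facts and a running speech, so there is no $\binom{k}{m}$ term to worry about. The persistent state consists of (i)~the relation $R$, augmented by a single extra column storing the current expected value $\mathcal{E}$ per row, which occupies $O(n)$ space; (ii)~the set $F$ of candidate facts, occupying $O(k)$ space; and (iii)~the up-to-$m$ selected facts $f_1^*,\ldots,f_m^*$ that are accumulated across iterations and finally joined in Line~15, occupying $O(m)$ space. Summing these gives $O(n+m+k)$.

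The remaining obligation is to argue that the transient intermediates created inside each iteration do not exceed this bound. The join $R\Join_{\mathcal{M}}F$ in Line~7 pairs each data row with every fact within whose scope it falls; although the number of such pairs can be $O(n\cdot k)$, this is a streaming computation feeding directly into the group-by $\Gamma_{\sum\mathcal{U},F}$, so only the aggregated result — one utility value per fact, i.e.\ $O(k)$ rows — needs to be stored, and an implementation using a hash aggregate processes the join output tuple-by-tuple without materializing it. Similarly, the projection in Line~11 rewrites $R$ in place (one updated expectation column) and does not grow the relation. The $\arg\max$ in Line~9 scans the $O(k)$ utility values using $O(1)$ additional space.

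The step I expect to require the most care is precisely this accounting of the join: one must be explicit that space complexity is measured by rows stored (as stated in the setup just before Theorem~\ref{th:exactTime}), not rows processed, and that the group-by immediately following the join collapses the intermediate to $O(k)$ size, so the $O(n\cdot k)$ figure is a time cost rather than a space cost. Once that distinction is made, the proof is a short enumeration: the accompanying write-up need only list the three surviving data structures, bound each, and note that every other intermediate is either pipelined away or an in-place overwrite of $R$, yielding total space $O(n+m+k)$.
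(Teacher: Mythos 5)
Your proposal is correct and follows essentially the same route as the paper's proof, which likewise bounds the per-row expectation column by $O(n)$, the per-fact utility values by $O(k)$, and the selected facts by $O(m)$. The extra care you take in arguing that the $R\Join_{\mathcal{M}}F$ intermediate is pipelined into the aggregate rather than materialized is a reasonable elaboration that the paper leaves implicit.
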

\begin{IEEEproof}
Besides the input, the algorithm stores a user expectation associated with each row, a utility value for each fact, and one optimal fact for each iteration. Respectively, space consumption is in $O(n)$, $O(k)$, and $O(m)$.
\end{IEEEproof}

% (updated after each iteration)
%Finally, we analyze scaling in the table dimensions. 

Finally, we bound the number of facts and queries considered by the system. Let $d$ be the number of dimension and $t$ the number of target columns, and $l$ the number of predicates used in each query and in each fact.

\begin{theorem}
The number of facts is in $O(\binom{d}{l}\cdot n^l)$.
\end{theorem}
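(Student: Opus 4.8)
The plan is to bound the number of facts by counting the ways to build a fact's scope. Recall a fact $f=\langle D,v\rangle$ is determined entirely by its scope $D$ (the typical value $v$ is then forced, being the average over rows within scope). A scope assigns values to a subset of the $d$ dimension columns, and by the configuration each fact uses at most $l$ predicates, so the scope restricts at most $l$ of the $d$ dimensions. First I would count the choices of \emph{which} dimensions are restricted: this is $\binom{d}{l}$ (treating $\binom{d}{l}$ as the dominant term, absorbing the lower-order $\sum_{j=0}^{l}\binom{d}{j}$ into the $O(\cdot)$). Then, for a fixed choice of $l$ dimension columns, I would count the value combinations that can appear in the scope: since the system only considers equality predicates for value combinations that actually occur in the data (as stated in the system overview), the number of such combinations is at most the number of distinct tuples realized in those $l$ columns, which is at most $n$ per column and hence at most $n^l$ overall. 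Multiplying the two counts gives $O(\binom{d}{l}\cdot n^l)$.

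The key steps in order: (1) observe a fact is characterized by its scope alone, since the typical value is a deterministic function of the scope and the relation; (2) decompose the scope into the \emph{support} (set of restricted dimensions, $\le l$ of them) and the \emph{assignment} of values to that support; (3) bound the number of supports by $\binom{d}{l}$ up to the constant hidden in $O(\cdot)$; (4) bound the number of assignments per support by $n^l$, using that only value combinations present in the data are instantiated, and that at most $n$ rows (hence at most $n$ distinct values) exist in any column; (5) combine by the product rule.

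I do not expect a serious obstacle here — the statement is essentially a combinatorial counting bound — but the one point that needs care is justifying the $n^l$ factor rather than a naive $(\text{domain size})^l$. The reason is the restriction, emphasized in Section~\ref{overviewSec}, that the Problem Generator only creates predicates for value combinations appearing in the data set; without that, the bound would depend on column cardinalities instead of $n$. A secondary subtlety is that the bound is stated with $l$ predicates "used in each query and in each fact," whereas the default configuration allows a fact to use up to two \emph{additional} predicates beyond the query predicates; I would simply take $l$ to be the total number of predicates per fact (as the theorem statement's phrasing permits), so that "at most $l$ restricted dimensions" holds verbatim, and the $\binom{d}{l}$ term dominates the sum over all sizes $0,\dots,l$.
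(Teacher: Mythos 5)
Your proposal is correct and matches the paper's own (one-line) argument: a fact is determined by choosing $l$ of the $d$ dimension columns and one of $O(n)$ data-realized values per chosen column, giving $O(\binom{d}{l}\cdot n^l)$. The extra care you take over the $n$-versus-domain-size point and over "at most $l$" versus "exactly $l$" predicates is sensible but does not change the approach.
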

\begin{IEEEproof}
A fact is defined by picking $l$ of $d$ columns and one of $O(n)$ possible equality predicates for each column.
\end{IEEEproof}

\begin{theorem}
The number of queries is in $O(t\cdot\binom{d}{l}\cdot n^l)$.
\end{theorem}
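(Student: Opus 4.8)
The plan is to reuse the counting argument from the preceding theorem on the number of facts and simply account for the extra choice of target column. Recall from Section~\ref{overviewSec} that the Problem Generator creates exactly one query for each combination of (i) a target column and (ii) a set of at most $l$ equality predicates on the dimension columns. So the first step is to observe that the set of queries is in bijection with the product of the set of target columns and the set of admissible predicate conjunctions, hence its cardinality is $t$ times the number of such conjunctions.

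Next I would bound the number of predicate conjunctions of length at most $l$. For a conjunction using exactly $l'$ predicates, we pick $l'$ of the $d$ dimension columns ($\binom{d}{l'}$ ways) and, for each chosen column, one of at most $n$ equality predicates corresponding to the distinct values appearing in that column ($O(n)$ choices each, so $O(n^{l'})$ in total). Summing over $l'=0,\ldots,l$ gives $\sum_{l'=0}^{l}\binom{d}{l'}n^{l'}$, and since $l$ is a fixed constant this sum is dominated by its last term, yielding $O(\binom{d}{l}\,n^{l})$ predicate conjunctions — exactly the bound of the previous theorem. Multiplying by the $t$ choices of target column gives the claimed $O(t\cdot\binom{d}{l}\cdot n^{l})$.

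There is essentially no obstacle here: the only point that deserves a sentence of care is that the configuration allows queries of length \emph{up to} $l$ rather than exactly $l$, so one must check that the geometric-type sum over shorter conjunctions does not change the asymptotic order — which it does not, because $l$, $d$ are treated as constants relative to the data size $n$. One could also remark that this is a worst-case upper bound, since in practice many value combinations may not co-occur, but that refinement is not needed for the stated $O(\cdot)$ claim.
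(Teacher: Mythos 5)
Your argument is correct and matches the paper's proof, which counts exactly the same way: one of $t$ target columns, $l$ of $d$ dimension columns, and one of $O(n)$ equality predicates per chosen column. The extra care you take about queries of length \emph{up to} $l$ (summing over $l'$ and noting the last term dominates) is a reasonable refinement the paper elides, but it does not change the approach.
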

\begin{IEEEproof}
A query is defined by picking one of $t$ target columns, $l$ of $d$ columns for placing predicates, and one of $O(n)$ possible predicates for each column.
\end{IEEEproof}

%Furthermore, we denote by $\sigma$ the probability that a row falls within the scope of a fact and by $d$ the number of dimension columns. 

%Altogether, the analysis indicates that the greedy algorithm scales significantly better in the number of speech facts. 

\section{Experimental Evaluation}
\label{experimentsSec}

We evaluate our algorithms in various experimental settings.

%We describe the setup in Section~\ref{setupSub} and compare speech generation methods in Section~\ref{sub:computational}. In Section~\ref{sub:userStudies}, we justify the speech quality model used for optimization via user studies, and perform an end-to-end comparison to visual interfaces. In Section~\ref{sub:deployment}, we analyze results from a public deployment of our voice interface. \textcolor{red}{Finally, we compare against baselines in Section~\ref{sub:vsBaselines}.}

\subsection{Experimental Setup}
\label{setupSub}

\begin{table}[t]
\caption{Overview of data sets used for experiments.\label{dataSetsTable}}
\begin{tabular}{p{2cm}p{1.5cm}p{1.5cm}p{1.75cm}}
\toprule[1pt]
\textbf{Data Set} & \textbf{Size} & \textbf{\#Dims} & \textbf{\#Targets}\\
\midrule[1pt]
ACS NY & 2~MB & 3 & 6\\
\midrule
Stack Overflow & 197~MB & 7 & 6 \\
\midrule
Flights & 565~MB & 6 & 1 \\
\midrule
Primaries & 6~MB & 5 & 1\\
\bottomrule[1pt]
\end{tabular}
\end{table}

Table~\ref{dataSetsTable} summarizes the data sets used for the following experiments. We consider an extract from the American Community Survey (ACS) focused on disability statistics, results of the 2019 Stack Overflow Developer survey\footnote{\url{https://insights.stackoverflow.com/survey/2019}}, a data set on flight delays\footnote{\url{https://www.kaggle.com/usdot/flight-delays}} (frequently used to evaluate OLAP interfaces~\cite{Trummer2018a}), and a data set on the democratic primaries\footnote{\url{https://data.fivethirtyeight.com/}}. Table~\ref{dataSetsTable} reports the number of dimensions and the total number of target columns. Unless noted otherwise, we generate speeches with three facts (prior work shows that user retention decreases sharply after three facts~\cite{Trummer2018}), considering all facts restricting up to two dimension columns. For all experiments, we use the average value in the target column as a (constant) prior. We used Postgres~9.5 as relational database, running on a t3.2xlarge EC2 instance with eight virtual cores and 350~GB of EBS volume. The operating system is Ubuntu Linux~18.04.

%All of the following performance experiments were executed on a MacBook Air laptop with 8~GB main memory and 1.8 GHz Intel Core i5 CPU. Postgres~11.5 was used as relational processing engine. We configured the maximal speech length to three facts, a value that seems reasonable given limitations of the human short term memory~\cite{Cowan2010}. 

\subsection{Comparing Pre-Processing Methods}
\label{sub:computational}

\begin{figure}[t]
\centering
\begin{tikzpicture}
\begin{groupplot}[group style={group size=1 by 2, x descriptions at=edge bottom, vertical sep=5pt}, xlabel={Evaluated Scenario - Target Attribute}, ymode=log, ybar=0pt, width=7cm, height=3.25cm, ymajorgrids, xtick=data, xticklabels={F-C, F-D, A-H, A-V, A-C, S-C, S-O, S-S}]
\nextgroupplot[ylabel={Time (s)}, bar width=3pt]
\addplot coordinates {(1,12622) (2,21330) (3,19) (4,10) (5,11) (6,172800) (7,0) (8,0)};
\addplot coordinates {(1,187) (2,193) (3,4) (4,4) (5,4) (6,933) (7,1111) (8,671)} ;
\addplot coordinates {(1,199) (2,89) (3,4) (4,4) (5,4) (6,979) (7,1141) (8,668)};
\addplot coordinates {(1,144) (2,105) (3,4) (4,4) (5,4) (6,416) (7,495) (8,284)} ;
\draw[red, ultra thick] (axis cs:0.5,172800) -- (axis cs:8.5,172800);
\nextgroupplot[ylabel={Utility (scaled)}, bar width=3pt, ymode=linear, legend columns=1, legend entries={E,G-B,G-P,G-O}, legend pos=outer north east]
\addplot coordinates {(1,1) (2,1) (3,1) (4,1) (5,1) (6,0) (7,0) (8,0)};
\addplot coordinates {(1,0.99) (2,1) (3,1) (4,0.99) (5,0.99) (6,1) (7,1) (8,1)};
\addplot coordinates {(1,0.99) (2,1) (3,1) (4,0.99) (5,0.99) (6,1) (7,1) (8,1)};
\addplot coordinates {(1,0.99) (2,1) (3,1) (4,0.99) (5,0.99) (6,1) (7,1) (8,1)};
%\addplot coordinates {(1,82562469) (2,)} ;
\end{groupplot}
\end{tikzpicture}

%\ref{performanceLegend} legend to name=performanceLegend
\caption{Performance comparison of presented algorithms in three scenarios (the red line marks the timeout).\label{fig:performance}}
\end{figure}
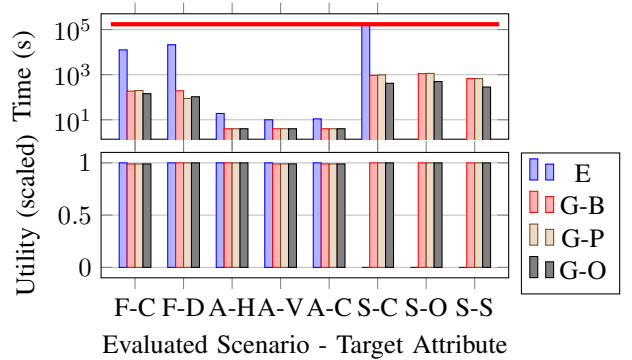

We compare four methods for generating speeches. We compare the exact algorithm (see Section~\ref{exactSec}), E in the following plots, against the greedy algorithm G-B in its base version (see Section~\ref{greedySec}), and with two other greedy variants, G-P and G-O, that use different types of fact pruning. Algorithm~G-P uses the fact pruning method described in Section~\ref{sub:pruning} with a simple pruning strategy. It uses all fact groups for pruning in the same order in which they are considered by Algorithm~\ref{planAlg}. Algorithm~G-O uses the pruning optimizer and cost model, introduced in Sections~\ref{sub:costModel} and \ref{sub:pruningOptimizer}. Figure~\ref{fig:performance} shows computation time and average utility of generated speeches for three scenarios and several target columns. We consider cancellation and delay for flights (F-C, and F-D), prevalence of hearing loss, visual impairment, and cognitive impairment for the ACS data (A-H, A-V, and A-C), and competence, optimism, and job satisfaction for the Stackoverflow data set (S-C, S-O, and S-S). We set a per-scenario timeout of 48 hours of processing time. We measure speech utility according to the model from Section~\ref{modelSec} and scale to one for each summarization problem instance. For each test case, we generated speeches for all queries with up to two equality predicates on the dimensions. With this configuration, each approach generated about 11,300 speeches for the Stackoverflow scenario, about 8,500 speeches about flights, and about 2,900 speeches for the ACS data.

%We measure speech utility by the reduction of the deviation between expectations and actual values, before and after listening to a speech (estimated via the model introduced in Section~\ref{modelSec}). For a fixed test case, we scale speech utility to the maximal speech utility measured by any algorithm that finished within the timeout. 
Exact optimization is possible within the time limit for flights and the ACS data (taking up to six hours for flight delays). For Stackoverflow, optimization for the first attribute takes more than 48 hours. This correlates with the number of facts reaching 3,700 facts per data subset in case of Stackoverflow (it is 1,300 facts for flights data and 764 facts for ACS). Greedy optimization is orders of magnitude faster. It produces speeches that are of comparable quality to exact optimization (the minimal average utility per test case is 98\%, far above the theoretical minimum of around 66\%). Fact pruning is only helpful if combined with optimization. Naive pruning may even increase computational overheads (due to added cost for establishing and comparing utility bounds). The optimization method presented in Section~\ref{sub:pruningOptimizer} reduces total computation time from 3,107 seconds (G-B) to 1,456 seconds (G-O). Naive pruning improves only slightly over the base version (3,088 seconds).

%Combined 26.1 MB

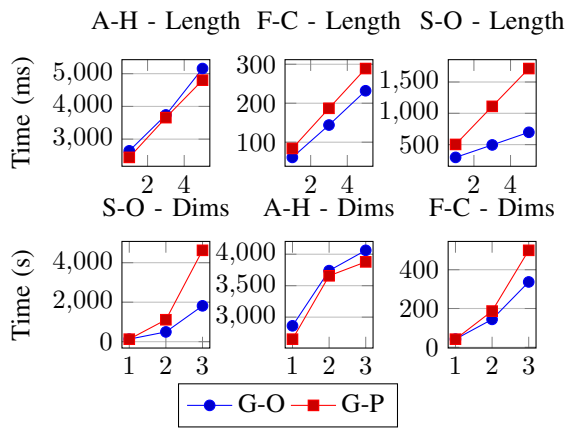
\begin{figure}[t]
\centering
\begin{tikzpicture}
\begin{groupplot}[group style={group size=3 by 2, ylabels at=edge left}, width=2.75cm, height=3cm, ylabel={Time (s)}, ymajorgrids]
\nextgroupplot[title={A-H - Length}, ylabel={Time (ms)}]
\addplot coordinates {(1,2642) (3,3739) (5,5157)};
\addplot coordinates {(1,2442) (3,3657) (5,4801)};
\nextgroupplot[title={F-C - Length}]
\addplot coordinates {(1,61) (3,144) (5,232)};
\addplot coordinates {(1,84) (3,187) (5,289)};
\nextgroupplot[title={S-O - Length}]
\addplot coordinates {(1,296) (3,495) (5,697)};
\addplot coordinates {(1,503) (3,1111) (5,1715)};
\nextgroupplot[title={S-O - Dims}, legend entries={G-O, G-P}, legend columns=2, legend to name=scalabilityLegend]
\addplot coordinates {(1,129) (2,495) (3,1816)};
\addplot coordinates {(1,128) (2,1111) (3,4627)};
\nextgroupplot[title={A-H - Dims}]
\addplot coordinates {(1,2862) (2,3739) (3,4062)};
\addplot coordinates {(1,2650) (2,3657) (3,3878)};
\nextgroupplot[title={F-C - Dims}]
\addplot coordinates {(1,42) (2,144) (3,337)};
\addplot coordinates {(1,42) (2,187) (3,500)};
\end{groupplot}
\end{tikzpicture}

\ref{scalabilityLegend}
\caption{Scaling speech length and fact dimensions.\label{fig:scalability}}
\end{figure}

Finally, we assess the scalability of G-O and G-B in terms of the speech length (number of selected facts) and the maximal number of dimensions mentioned per single fact. Figure~\ref{fig:scalability} reports corresponding results. Scaling is more graceful when increasing speech length, compared to the number of dimensions per fact. G-O reduces overheads, compared to G-P. %In line with the results of our complexity analysis, computation time scales linearly in the speech length. Performance decreases faster when increasing the number of columns facts refer to. This seems logical as the number of facts may increase exponentially in the number of referred columns. Also, the relative impact of optimized fact pruning increases with increasing number of facts. Overall, the methods presented in the previous sections are effective at reducing computational overheads.

\subsection{User Studies}
\label{sub:userStudies}

%\begin{comment}
\begin{figure}[t]
\centering
\begin{tikzpicture}
\begin{groupplot}[group style={group size=2 by 1, horizontal sep=1.25cm}, width=3.6cm, height=3.25cm, xticklabels={Precise, Good, Complete, Informative}, ymajorgrids, xticklabel style={rotate=60, font=\small}, xtick={1,2,3,4}, xmin=0.5, xmax=4.5]
% precise, good, complete, informative
\nextgroupplot[ybar=0pt, bar width=4pt, ylabel={Nr.\ Wins}]
\addplot coordinates {(1, 31) (2,31) (3,31) (4,34)};
\addplot coordinates {(1,23) (2,33) (3,25) (4,22)};
\addplot coordinates {(1,44) (2,42) (3,37) (4,50)};
\nextgroupplot[ybar=0pt, bar width=4pt, ylabel={Quality}, legend entries={Worst, Medium, Best}, legend columns=1, legend pos=outer north east]
\addplot coordinates {(1, 6.7) (2,6.2) (3,6.5) (4,6.4)};
\addplot coordinates {(1,6.71) (2,6.7) (3,6.7) (4,6.6)};
\addplot coordinates {(1,6.9) (2,6.7) (3,6.8) (4,6.9)};
\end{groupplot}
\end{tikzpicture}
%\ref{userPrefLegend}legend to name=userPrefLegend
\caption{Preferences of AMT workers correlate with speech quality model.\label{fig:userPref}}
\end{figure}
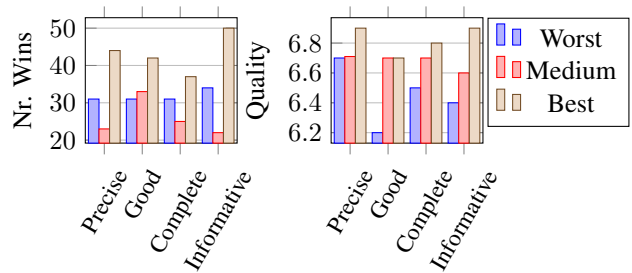
%\end{comment}

We validate our model of speech quality (based on our model of user expectations) through a series of experiments on the Amazon Mechanical Turk (AMT) platform. We generally require a worker acceptance rate of at least 75\% and pay 10 cents per human intelligence task (HIT). Speeches were generated using voice ``Salli'' on \url{TTSMP3.com}. First, for the flights and ACS data set, we generated 100 speeches by randomly selecting facts and ranked them according to our quality model. For both data sets, we selected three speeches (best ranked, worst ranked, and median) for the following study. We asked 50 crowd workers to compare speeches and to rate each on a scale from one to ten according to the criteria ``Precise'', ``Good'', ``Complete'', and ``Informative''. Comparing alternative descriptions of the same data gives crowd workers a base for evaluating the quality of specific speeches. Figure~\ref{fig:userPref} reports average ratings and the number of times, the corresponding speech won a relative comparison. Speech quality, measured according to the model used for optimization, correlates with user preferences. 

%Out of 600 created HITs, we received 596 replies until the deadline.

%Task created on June 19th (Friday) at 4 PM Eastern Time. Acceptance rate of at least 75\%. 10 cents per HIT, requested 500 answers er question (received 596/600 answers until deadline). Average time per assignment: 3 minutes 5 seconds. Relative difference between best and worst cost: 2.6 (salary), 2.23 (flights), 2.22 (acs). Generated facts considering only two high-level dimensions, measured precision compared to breakdown. Generated 100 random speeches and ranked them according to cost model. Average distance between highest and lowest mark was 2.47 (i.e., crowd workers did not fully exploit the scale).

\begin{figure}[t]
\centering
\begin{tikzpicture}
\begin{groupplot}[xlabel={Borough}, ylabel={Count}, group style={group size=3 by 1, ylabels at=edge left, horizontal sep=20pt}, width=3.35cm, xmin=-0.75, xmax=4.75, xticklabels={Brooklyn, Manhattan, Queens, St.\ Island, Bronx}, xtick={0, 1, 2, 3, 4}, xticklabel style={font=\small, rotate=90}, ymajorgrids]
\nextgroupplot[ybar=0pt, bar width=3pt, title={Teenagers}]
\addplot table[x expr=\pgfplotstablerow, y index=0, col sep=tab, header=false] {plots/acsuserstudy/acsRandEstimatesY.csv};
\addplot table[x expr=\pgfplotstablerow, y index=0, col sep=tab, header=false] {plots/acsuserstudy/acsOptEstimatesY.csv};
\addplot table[x expr=\pgfplotstablerow, y index=0, col sep=tab, header=false] {plots/acsuserstudy/acsValuesY.csv};
\nextgroupplot[ybar=0pt, bar width=3pt, title={Adults}]
\addplot table[x expr=\pgfplotstablerow, y index=0, col sep=tab, header=false] {plots/acsuserstudy/acsRandEstimatesA.csv};
\addplot table[x expr=\pgfplotstablerow, y index=0, col sep=tab, header=false] {plots/acsuserstudy/acsOptEstimatesA.csv};
\addplot table[x expr=\pgfplotstablerow, y index=0, col sep=tab, header=false] {plots/acsuserstudy/acsValuesA.csv};
\nextgroupplot[ybar=0pt, bar width=3pt, legend to name=vispLegend, legend entries={Worst, Best, Correct}, legend style={font=\small}, title={Elders}, legend columns=3]
\addplot table[x expr=\pgfplotstablerow, y index=0, col sep=tab, header=false] {plots/acsuserstudy/acsRandEstimatesE.csv};
\addplot table[x expr=\pgfplotstablerow, y index=0, col sep=tab, header=false] {plots/acsuserstudy/acsOptEstimatesE.csv};
\addplot table[x expr=\pgfplotstablerow, y index=0, col sep=tab, header=false] {plots/acsuserstudy/acsValuesE.csv};
\end{groupplot}
\end{tikzpicture}

\ref{vispLegend}
\caption{Comparing worker estimates for visual impairment prevalence, after listening to worst or best speech description, to correct values.\label{fig:visp}}
\end{figure}
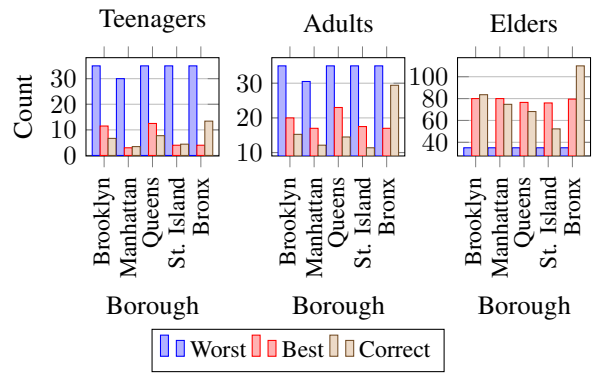

\begin{table}[t]
\caption{Comparing two alternative speech descriptions.\label{twoSpeechesTable}}
\begin{tabular}{p{1.25cm}p{6cm}}
\toprule[1pt]
\textbf{Worst Speech} & About 30 out of 1000 persons in Manhattan identify as visually impaired. It is 35 for Brooklyn. It is 35 overall. \\
\midrule
\textbf{Best Speech} & About 80 out of 1000 elder persons identify as visually impaired. It is 17 for adults. It is 3 for teenagers in Manhattan. \\
\bottomrule[1pt]
\end{tabular}
\end{table}

For the next study, we used the best and worst ranked speech from the ACS scenario, shown in Table~\ref{twoSpeechesTable}. We asked AMT workers to estimate 15 data points, characterized by a borough of New York City and an age group, according to two speeches. We created 20 HITs per data point and per speech, i.e.\ 600 HITs in total (599 of them were answered before the deadline). Figure~\ref{fig:visp} shows how median estimates, based on worst and best ranked speech, correlate with accurate values. Clearly, deviation between estimates and accurate values correlates with speech quality, according to our model.

%111 distinct workers, collected 20 answers by different workers per question, 10 cents per task, minimum acceptance rate of 75\%, obtained answers for 599/600 HITs. Started on June 20th (Saturday) at 7 PM Eastern time. Average duration was 189 seconds. 

\begin{comment}
\begin{figure}[t]
\begin{tikzpicture}
\begin{axis}[width=5.5cm, height=3cm, xlabel={Error}, ylabel={Count}, legend pos=outer north east, legend entries={Worst Speech, Best Speech}, legend columns=1, ymajorgrids]
\addplot+[const plot, no marks, ultra thick] table[x index=0, y index=1, col sep=tab] {plots/acsuserstudy/randomerror.csv};
\addplot+[const plot, no marks, ultra thick] table[x index=0, y index=1, col sep=tab] {plots/acsuserstudy/optimalerror.csv};
\end{axis}
\end{tikzpicture}
\caption{Error distribution of worker estimates after listening to different speeches.\label{fig:acsError}}
\end{figure}

Median error after worst speech is 27.3, 12.4 after best speech.

Average error is 33.8 after worst and 25.5 after best speech.
\end{comment}

\begin{figure}[t]
\centering
\begin{tikzpicture}
\begin{groupplot}[group style={group size=2 by 1}, xtick=\empty, width=3cm, ybar=0pt, ymajorgrids]
\nextgroupplot[bar width=5pt, title={ACS}, ylabel={Error}]
\addplot coordinates {(1,27)};
\addplot coordinates {(1,25)};
\addplot coordinates {(1,19.5)};
\addplot coordinates {(1,32.5)};
\nextgroupplot[bar width=5pt, legend pos=outer north east, legend entries={Farthest, Avg.\ Scope, Closest, Avg.\ All}, legend columns=1, title={Flights}]
\addplot coordinates {(1,45.5)};
\addplot coordinates {(1,44.25)};
\addplot coordinates {(1,43.5)};
\addplot coordinates {(1,44.75)};
\end{groupplot}
\end{tikzpicture}
\caption{Error when using different models to predict how crowd workers process conflicting facts.\label{fig:intuition}}
\end{figure}
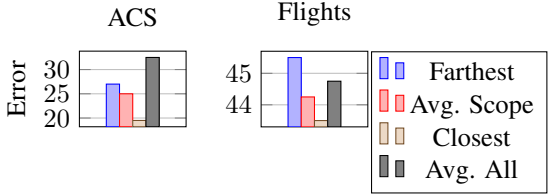

% (for each dimension, the value associated with minimal and maximal average values in the target column was chosen)

%160 HITs in total. All were answered until the deadline

Next, we study how crowd workers resolve conflicting information in a speech description. We consider flight statistics and ACS data on visual impairment. For each data set, crowd workers were given four facts referring to two dimension columns respectively (season and flight time of day for flights, borough and age group for ACS). E.g., in case of the ACS data, crowd workers were given visual impairment statistics for Staten Island and the Bronx, as well as for children and elder persons. We asked crowd workers to estimate prevalence for all four possible combinations of values mentioned in the facts. Workers must therefore prioritize conflicting information between the two respective attributes. We created 20 HITs for each possible combination for both scenarios, all of which were answered. Figure~\ref{fig:intuition} shows the results. For each question, two out of the four facts are relevant (``within scope'', according to the terminology from Section~\ref{modelSec}) and propose a target value. Figure~\ref{fig:intuition} compares four different models of predicting user behavior: estimating the proposed value that is farthest from the accurate one, the one that is closest, using the average of values that appear in relevant facts, and using the average of all values that appear in (relevant or irrelevant) facts. The figure shows the median error between values suggested by crowd workers, and expectations according to the compared model. Using the closest values that appears in relevant facts, yields the best approximation. %This is consistent with the model proposed in Section~\ref{modelSec}. %It shows that users are able, at least in certain scenarios, to identify the most relevant out of multiple facts, based on commonsense knowledge. 

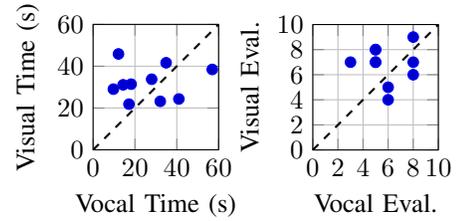
\begin{figure}[t]
\centering
\begin{tikzpicture}
\begin{groupplot}[group style={group size=2 by 1, horizontal sep=1.25cm}, width=3.25cm, height=3.25cm, xmajorgrids, ymajorgrids, xmin=0, ymin=0]
\nextgroupplot[xmax=60, ymax=60, xlabel={Vocal Time (s)}, ylabel={Visual Time (s)}]
\addplot+[only marks] table[col sep=tab, x=VocalAvgTime, y=VisualAvgTime] {plots/ConnorStudies/summary.tsv};
\draw[dashed, thick] (axis cs:0,0) -- (axis cs:60,60);
\nextgroupplot[xmax=10, ymax=10, xlabel={Vocal Eval.}, ylabel={Visual Eval.}]
\addplot+[only marks] table[col sep=tab, x=VocalEval, y=VisualEval]  {plots/ConnorStudies/summary.tsv};
\draw[dashed, thick] (axis cs:0,0) -- (axis cs:10,10);
%\addplot+[only marks] coordinates {()};
\end{groupplot}
\end{tikzpicture}
\caption{User study comparing visual to voice query interfaces.\label{fig:visualVsVoice}}
\end{figure}

We primarily target use cases where visual interfaces are unavailable, e.g.\ due to limitations of the device (e.g., smart speakers without screen). Nevertheless, we performed a user study over Zoom in which we compare our voice interface against a public interface for visual data analysis\footnote{\url{https://www.visualizefree.com/}}. We recruited 10 undergraduate students (nine of them in computer science) for this study. Participants used their smart phones to access the Stackoverflow data set via the voice interface. We gave participants five minutes to familiarize themselves with each interface and data set. %Before starting the study, we gave participants time (five minutes per interface) to familiarize themselves with both interfaces and the data set.

%, particularities of the environment (e.g., data access while visual attention is focused elsewhere~\cite{Lyons2016}), or other reasons (e.g., visually impaired users). 
% Hence, our goal is not to outperform visual interfaces for data analysis. 

Figure~\ref{fig:visualVsVoice} summarizes the main results. First, we timed participants trying to answer three questions per interface (we offered a bonus based on their average times over all interfaces). Questions were generated by randomly selecting two equality predicates (with uniform distribution) and asking for an average value in the corresponding data subset. We show median values over three questions on the left side of Figure~\ref{fig:visualVsVoice}. The majority of users were slightly faster using the voice interface. Next, we allowed users to freely explore the data to find interesting facts. We asked them to evaluate overall usability of each interface on a scale from one to ten. We show the results on the right-hand side of Figure~\ref{fig:visualVsVoice}. %Overall, six users preferred the visual interface while four users preferred the voice interface (some data points overlap in the figure). %Given our goal of complementing (rather than replacing) visual interfaces, we consider those results promising.

%On the other side, we observed the maximal time per question for the voice interface (147 seconds due to repeated errors in speech recognition). 

\subsection{Public Deployment}
\label{sub:deployment}

\begin{table}[t]
\caption{Classification of last 50 voice requests for three public Cloud deployments.\label{tab:classification}}
\centering
\begin{tabular}{llll}
\toprule[1pt]
\textbf{Request Type} & \textbf{Primaries} & \textbf{Flights} & \textbf{Developers} \\
\midrule[1pt]
Help & 17 & 9 & 4 \\
\midrule
Repeat & 3 & 0 & 0 \\
\midrule
S-Query & 16 & 12 & 13 \\
\midrule
U-Query & 1 & 5 & 16 \\
\midrule
Other & 13 & 24 & 17 \\
\bottomrule[1pt]
\end{tabular}
\end{table}

\begin{figure}[t]
\centering
\subfigure[Queries by complexity (0, 1, 2 predicates from dark blue to yellow).\label{fig:queryComplexity}]{
\begin{tikzpicture}
\pie[radius=0.75,text=legend,sum=auto , after number=]{15/0-Preds, 47/1-Preds, 1/2-Preds}
\end{tikzpicture}
}
\subfigure[Types (retrieval, comparison, extremum from dark blue to yellow).\label{fig:queryType}]{
\begin{tikzpicture}
\pie[radius=0.75,sum=auto , text=legend, after number=]{49/Ret., 6/Cmp., 8/Ext.}
\end{tikzpicture}
}
\caption{Classifying queries by size and type.}
\end{figure}
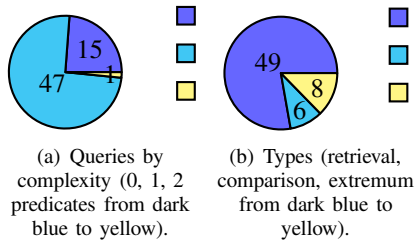

Using our greedy approach, we made three data sets publicly available for voice querying via the Google Assistant platform, namely the Stackoverflow developer survey, flight statistics, and data on the democratic primaries. The first two data sets have been available since January 2020, the last one was available over a period of two months during the democratic primary season. We generated speeches using the approach presented in this paper. At the time of writing, we received about 1,500 unique voice queries. We analyzed the last 50 voice queries for each of the three data sets.

% (e.g., accessible via all Android phones and tablets). Those data sets are the Stackoverflow developer survey, flight statistics, and data on the democratic primary.

% (thereby avoiding overlap with queries issued during our user studies)

% (not considering voice commands for invoking the application itself)

%The resulting speeches take up only 26~MB of memory, thereby fitting into main memory of a standard Google App engine instance. 

First, we classified the 150 queries into help requests, requests for repeating the last output, supported and unsupported (data access) queries (S-Query and U-Query), and other requests. Table~\ref{tab:classification} shows the results. Help requests are relatively common and show the importance of corresponding mechanisms (which our application provides). We take the low number of ``repeat output'' requests as a sign that voice output was mostly understandable. %The number of ``other'' requests is by far the highest for the flights data set. In particular, we observed multiple requests from users who tried to book flights via our interface.

%The ratio of supported queries is significantly higher for flights and primaries than for the Stackoverflow developer survey. We suspect that the latter data set, more than the others, attracts users who are used to complex data access queries. We plan to verify this hypothesis with further studies. 

Considering data access queries, our query model covers about two thirds of incoming queries. The unsupported queries include queries asking for maxima or for relative comparisons between two data subsets (e.g., ``make a comparison between job satisfaction between men and women''), as well as queries referring to unavailable data (e.g., questions for delays of specific flights). All analyzed queries were relatively short (maximum length of 90 characters). As shown in Figure~\ref{fig:queryComplexity}, the analyzed queries restrict between zero and two dimension columns (we had one single voice query with three predicates during one of our user studies). Most queries fall into the category of retrieval queries (supported) while fewer queries ask for extrema or relative comparisons. We did not encounter any voice queries that translate into complex SQL queries using features such as exists condition or predicates on groups (i.e., having clauses). In summary, a relatively large portion of queries comply with the format considered in pre-processing. %In particular for the democratic primary set, incoming queries are often open-ended (e.g., ``Tell me about the primaries'', ``How is the democratic primaries going'', or ``I want to hear the latest polls for the democratic primary''), motivating data summarization. 

\begin{figure}
    \centering
    \begin{tikzpicture}
        \begin{groupplot}[group style={group size=2 by 1, horizontal sep=1.25cm}, width=3.5cm, height=3.25cm, xtick={1,2,3}, xticklabels={S, F, P}, xticklabel style={rotate=0, font=\small}, ybar=0pt, xmin=0.5, xmax=3.5, ymin=10, ymajorgrids, ymode=log, legend entries={This (Pre-Processing), This (Run Time), Baseline}, legend to name=processingLegend, legend columns=3]
            \nextgroupplot[ylabel={Latency (ms)}, bar width=5pt]
            \addplot coordinates {(1,74) (2,60) (3,54)};
            \addplot+[xshift=-5pt] coordinates {(1,74) (2,60) (3,54)};
            \addplot+[xshift=-5pt] coordinates {(1,500) (2,500) (3,500)};
            \nextgroupplot[ylabel={Time (ms)}, ybar=0pt, bar width=5pt]
            \addplot coordinates {(1,135) (2,85) (3,75)};
            \addplot+[xshift=-5pt] coordinates {(1,74) (2,60) (3,54)};
            \addplot+[xshift=-5pt] coordinates {(1,9769) (2,9833) (3,9000)};
        \end{groupplot}
    \end{tikzpicture}
    \ref{processingLegend}
    \caption{Average latency and per-query processing time for Stackoverflow (S), Flights (F), and Primaries (P) data sets.}
    \label{fig:runTimeOverheads}
\end{figure}
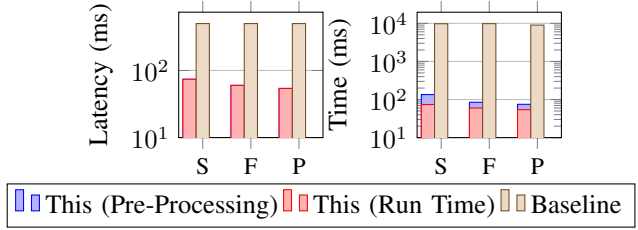

\begin{figure}[t]
\centering
\begin{tikzpicture}
\begin{groupplot}[group style={group size=2 by 1, horizontal sep=1.25cm}, width=3.6cm, height=3.25cm, xticklabels={Precise, Good, Complete, Informative, Diverse, Concise}, ymajorgrids, xticklabel style={rotate=60, font=\small}, xtick={1,2,3,4,5,6}, xmin=0.5, xmax=6.5, ylabel near ticks]
\nextgroupplot[ybar=0pt, bar width=4pt, ylabel={Nr.\ Wins}, legend entries={Baseline, This Approach}, legend to name=vsBaselineUserLegend]
\addplot coordinates {(1,39) (2,41) (3,40) (4,44) (5,55) (6,51)};
\addplot coordinates {(1,84) (2,81) (3,85) (4,84) (5,65) (6,78)};
\nextgroupplot[ybar=0pt, bar width=4pt, ylabel={Quality}, legend entries={Baseline, This}, legend columns=1, legend pos=outer north east]
\addplot coordinates {(1,6.5) (2,6.6) (3,6.5) (4,6.6) (5,6.6) (6,6.7)};
\addplot coordinates {(1,7.5) (2,7.5) (3,7.3) (4,7.5) (5,6.9) (6,7.1)};
\end{groupplot}
\end{tikzpicture}

%\ref{vsBaselineUserLegend}
\caption{Preferences of AMT workers when comparing different speeches describing the same data.\label{fig:vsBaselineUser}}
\end{figure}
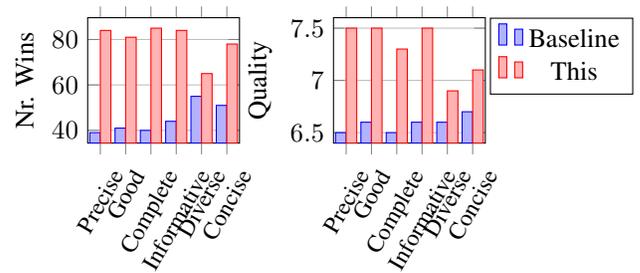

%Maximally 90 characters per request. Longest request asks to compare salary between expert and novice coders. 

\subsection{Comparison to Baselines}
\label{sub:vsBaselines}

Next, we compare the greedy algorithm with optimized pruning against a recently proposed data vocalization method~\cite{Trummer2018a, Trummer2019a}. The prior method uses sampling to approximate the quality of a multitude of candidate speeches. Sampling creates non-negligible overheads at run time but can be partially overlapped with speech output. Hence, latency (i.e., time until the system starts speaking) is significantly lower than total sampling time. We report both, latency and processing time, in Figure~\ref{fig:runTimeOverheads}. We average all metrics over the queries analyzed in the last subsection (the supported data access queries among them). Our approach has very low run time overheads as it merely looks up the best pre-generated speech. For the baseline, latency is determined by the time required to select the first sentence via sampling. In exchange, we spend a total of 25 minutes in pre-processing to generate speech answers for a total of 28,720 queries, considering all three data sets. As long as data remain static, significant pre-processing overheads can be amortized over many queries. In Figure~\ref{fig:runTimeOverheads}, we also show per-query pre-processing overheads for each data set. The speeches generated by the baseline provide only value ranges as opposed to specific averages (to account for imprecision of sampling). Hence, we cannot calculate utility according to our utility model. However, we asked crowd workers to compare speech descriptions describing the same data, generated by the baseline and our approach. We use the same experimental setup as for the experiments reported in Figure~\ref{fig:userPref} (except that we add two more adjectives, ``Concise'' and ``Diverse''). We compare speeches for flights in general, for flights in the Northeast, and for flights in the Northeast and in Winter (three queries that were used for experiments by the prior publication~\cite{Trummer2018a}). Figure~\ref{fig:vsBaselineUser} shows results based on 900 HITs. Reporting precise values (e.g., ``the cancellation probability is 6\%'' as opposed to ``the cancellation probability is between 5 and 10\%'') likely leads to gains for properties like ``Precise'' and ``Informative''.

Finally, we performed a small experiment in which we trained a machine learning (ML) model with pairs of speech fragments and summaries (i.e., a text containing facts considered by our approach and the summary generated). Our goal was to verify whether ML methods can use a small seed set of summaries to generate the remaining ones. We focus on speeches for a common query template (i.e., for queries with predicates on the same combination of dimension columns) which tend to be similar. For the flights data set, we selected the dimension with the largest number of distinct values (start airport region with 52 values) and consider all queries placing one predicate on that dimension. Our implementation is based on the Simpletransformers library\footnote{https://simpletransformers.ai/}. Internally, it uses a sequence-to-sequence model based on pre-trained language models (an approach known to require orders of magnitude less training samples compared to prior methods~\cite{Howard2018}). We trained for 10 epochs on a Google CoLab standard GPU. We used 49 training samples (training took 30 seconds) and three samples for testing. ML predictions are fast (24 milliseconds per sample) and generate speeches that use similar syntactic patterns as ours. However, the ML-generated speeches are often redundant (multiple facts in the same speech referencing the same dimension) and tend to focus on overly narrow data subsets (e.g., cancellations in specific months instead of seasons). We created 900 AMT HITs to compare ML-generated speeches to the ones generated by our approach, according to the adjectives shown in Figure~\ref{fig:vsBaselineUser} and using the same experimental setup. The ML-generated speeches were consistently ranked lower (average rating below 5.92 for any adjective) compared to the proposed approach (average ratings of more than 7.28 for any adjective). ML-based summarization is showing promise but will likely require specialized variants for satifsactory performance.

\section{Related Work}
\label{relatedSec}

%Our work relates in particular to three research domains: vocalization, visualization, and approaches for efficient OLAP. 

%The focus in this work is on efficiently generating speech data summaries. In that, it relates to prior work on data vocalization

%Our work differs by the technique we exploit to reduce speech generation overheads. 

This work connects to prior work on generating speech summaries (``data vocalization'')~\cite{Trummer2017, Trummer2018, Trummer2018a, Trummer2019a}. While prior work exploits sampling or incremental processing, we generate optimal speeches in a pre-processing step. We compare against a corresponding baseline in Section~\ref{sub:vsBaselines}.

%This is motivated by a focus on Cloud scenarios where a backend serves multiple users. In that scenario, the goal is to reduce computational overheads (and therefore monetary execution fees). In contrast to that, prior work on incremental processing~\cite{Trummer2019a} has focused on minimizing latency perceived by the user (without reducing processing overheads). Prior work on sampling for vocalization still requires run time processing and often requires to overlap speech output with sampling to achieve near-optimal speech quality~\cite{Trummer2018a}. Again, this leads to elevated computation overheads at run time. 

Our work connects to work on voice and multi-modal interfaces~\cite{Lyons2016, Saul2019, Simitsis2008a, Yang2019}. It differs by our focus on summarizing large data sets via voice output. Prior work on natural language to SQL interfaces~\cite{Arbor2016, Li2014, Yahya2012, Francia2020, Affolter2019, Saha2016} addresses challenges in translating text to queries. Instead, our primary focus is on translating query results into concise text.

%Generally, an end-to-end voice interface requires techniques for speech recognition and natural language processing. So far, we use existing components while focusing on the output side. % (i.e., generation of voice descriptions). 

%Our query and speech model resemble the ones used for OLAP (since we consider aggregates, broken down by data subset). Hence, o

Our approach connects to prior work optimizing which visualizations or OLAP cubes to show to users~\cite{Joglekar2015a, Marcel2012, Sarawagi1998, Sarawagi2000, Sarawagi2000a, Vartak2014, Wongsuphasawat2015}. Typically, this work optimizes at the granularity of plots, characterized by breakdown attributes and potentially data subsets. The search space in speech generation is larger as each fact can refer to different attributes and different data subsets. At the same time, the amount of information that can be transmitted via speech tends to be lower, motivating higher efforts in picking the right pieces of information to transmit.   As opposed to non-speech sonifications~\cite{Hermann2011, Ramloll2001} (e.g., translating time series into a pitch), we focus on generating speech descriptions instead. % The main difference between our and prior work is the focus on voice output (as opposed to non-speech sounds) and large data sets. This motivates our approach for interleaved processing and voice output. 

We can state our problem as summarizing~\cite{Jing2000} a large text, containing various facts, concisely. We try a simple variant in Section~\ref{sub:vsBaselines}. Our work relates to prior work on data-to-text generation. Early approaches use domain-specific rules~\cite{Kittredge1994, McKeown1995}. Recent work~\cite{Puduppully2019} exploits machine learning to learn summarizing data for certain domains, based on training samples. Most prior work focuses on generating multi-paragraph summaries that would be too long for voice output. Our problem model, maximizing utility by a bounded number of facts, is motivated by the conciseness constraints particular to voice summaries. Our method uses standard SQL operations and can be executed without moving data out of the database system. Furthermore, our approach can be applied without training samples or hand-crafted summarization rules. 

%The system using our speech summarization methods was recently demonstrated at the VLDB conference~\cite{Trummer2020}.
%Using our method to generate training samples for a learning-based approach is an interesting avenue for future work.

%\cite{Puduppully2019}~trained on thousands of summaries

%\cite{Hassan2014}

\section{Conclusion}
\label{conclusionSec}
%presented an approach that prepares speech answers for voice queries in a pre-processing step, thereby reducing run time overheads. 
We generate speech answers to voice queries efficiently via pre-processing. Our approach reduces run time overheads and was validated in a public deployment.

%Voice query interfaces are motivated by scenarios in which visual interfaces are not available, due to constraints of devices (e.g., smart speakers), due to the user context (e.g., data access while driving), or due to other reasons (e.g., visual impairment). We presented an efficient approach that reduces run time overheads for voice querying to almost zero. We proposed and compared different pre-processing algorithms in terms of result quality and overheads. Furthermore, we validated the proposed approach in several user studies and in a public deployment.

%We calculate near-optimal speech summaries in a pre-processing step, leveraging several methods to increase efficiency. 

%Nevertheless, we have compared voice against visual query interfaces in several user studies. While visual interfaces enable users to explore data at a higher degree of detail, lay users with little experience in data analysis often consider voice interfaces a convenient alternative.

%Voice querying is a novel area and various research challenges remain unsolved. Among our future work plans are extensions to the supported query model as well as extensions to new types of data.

\bibliographystyle{abbrv}
\bibliography{library}

\begin{comment}

\end{comment}

\end{document}